\documentclass[journal]{IEEEtran}

\setlength\unitlength{1mm}



\long\def\comment#1{}

\newcommand{\be}{\begin{equation}}
\newcommand{\ee}{\end{equation}}
\usepackage{amsthm}
\newtheorem{theorem}{Theorem}

\newtheorem{lemma}[theorem]{Lemma}
\newtheorem{definition}{Definition}
 
\newtheorem{prop}{Proposition}
\newtheorem*{proof*}{Proof}
 

\newfont{\bbb}{msbm10 scaled 700}

\newfont{\bb}{msbm10 scaled 1100}


\newcommand{\ev}{{\bf e}}

\newcommand{\hv}{{\bf h}}

\newcommand{\vv}{{\bf v}}
\newcommand{\xv}{{\bf x}}
\newcommand{\yv}{{\bf y}}


\newcommand{\Am}{{\bf A}}

\newcommand{\Hm}{{\bf H}}

\newcommand{\Sm}{{\bf S}}



\newcommand{\Ac}{{\cal A}}

\newcommand{\Gc}{{\cal G}}
\newcommand{\Hc}{{\cal H}}

\newcommand{\Kc}{{\cal K}}

\newcommand{\Mc}{{\cal M}}
\newcommand{\Nc}{{\cal N}}

\newcommand{\Sc}{{\cal S}}

\newcommand{\Vc}{{\cal V}}


\newcommand{\RNum}[1]{\uppercase\expandafter{\romannumeral #1\relax}}


\newcommand{\muv}{\hbox{\boldmath$\mu$}}

\newcommand{\Sigmam}{\hbox{\boldmath$\Sigma$}}


\newcommand{\eqdef}{\stackrel{\Delta}{=}}


\newcommand{\gameNF}{\mathcal{G}}


\usepackage{dsfont}
\usepackage{amsmath}
\DeclareMathOperator*{\argmin}{arg\,min}

\usepackage[margin=1in]{geometry}
\usepackage{hyperref}
\usepackage{circuitikz}
\usepackage{booktabs}
\usepackage{amssymb}
\usepackage{algorithm}
\usepackage{algpseudocode}

\usepackage{graphicx}
\usepackage{epstopdf}
\usepackage{multirow}
\usepackage{cite}

\usepackage{mathrsfs}
\usepackage{cuted}
\AfterEndEnvironment{strip}{\leavevmode}
\usepackage{amsmath,amssymb,amsfonts,latexsym,textcomp,nccmath,mathtools}
\usepackage{microtype}
\usepackage{mleftright}
\usepackage{multicol}
\mleftright 
\thinmuskip = 1mu\relax 
\medmuskip = 1mu \relax
\thickmuskip = 2mu plus 2mu\relax 
\usepackage{easylist}

\begin{document}
	 \title{Decentralized Stealth Attacks on Cyber-Physical Systems}
	 \author{Xiuzhen~Ye,
	 	I\~naki Esnaola, 
	 	Samir M. Perlaza,
	 	and Robert~F. Harrison
	 	     \begin{multicols}{1}
	 	     	\begin{@twocolumntrue}  
	 	\thanks{
   This research was supported jointly by the Department of Automatic Control and Systems Engineering, University of Sheffield and China Scholarship Council.
   
	 		X. Ye was with the Department of Automatic Control and Systems Engineering, University of Sheffield, Sheffield S1 3JD, UK. I. Esnaola and R.~F. Harrison are with the Department of Automatic Control and Systems Engineering, University of Sheffield, Sheffield S1 3JD, UK.
	 		I. Esnaola is also with the Department of Electrical Engineering, Princeton University, Princeton NJ
	 		08544, USA. (email: xzhen0921@gmail.com, esnaola@sheffield.ac.uk, r.f.harrison@sheffield.ac.uk).
	 		
	 		S.~M. Perlaza is with the Institut National de Recherche en Informatique
	 		et Automatique (INRIA), Lyon, France, and also with the Department of
	 		Electrical Engineering, Princeton University, Princeton, NJ 08544 USA (email: samir.perlaza@inria.fr).}
 		\end{@twocolumntrue}
 	\end{multicols}
	 }
	 
\maketitle
\begin{abstract}	
	Decentralized stealth attack constructions that minimize the mutual information between the state variables and the measurements are proposed. The attack constructions are formulated as random Gaussian attacks targeting Cyber-physical systems that aims at minimizing the mutual information between the state variables and measurements while constraining the Kullback-Leibler divergence between the distribution of the measurements under attacks and the distribution of the measurements without attacks. The proposed information metrics adopted measure the disruption and attack detection both globally and locally. The decentralized attack constructions are formulated in a framework of normal games. The global and local information metrics yield games with global and local objectives in disruption and attack detection. We have proven the games are potential games and the convexity of the potential functions followed by the uniqueness and the achievability of the Nash Equilibrium, accordingly. We proposed a best response dynamics to achieve the Nash Equilibrium of the games. We numerically evaluate the performance of the proposed decentralized stealth random attacks on IEEE test systems and show it is feasible to exploit game theoretic techniques in decentralized attack constructions.
\end{abstract}
\begin{IEEEkeywords}
	Decentralized attacks, Data injection attacks, cyber-physical systems, information metrics, probability of detection.
\end{IEEEkeywords}

\section{Introduction}


Cyber-physical systems (CPSs) are complex engineered systems that emerge from integrating a cyber layer with physical systems consisting of processes, devices, and infrastructure. There is a growing interest in CPSs from both the public and private sectors. The US National Science Foundation CPSs as a key research area \cite{NSF_web}, recognizing their potential for improved efficiency and functional performance across different sectors, while the UK government recognizes them as a key to embed resilience into infrastructure \cite{Advisor_UK_24}. 
The applications of CPSs are vast and varied, spanning from local systems like building energy management \cite{AR_TonEmergingTopicsinComputing_17} to large-scale integrations such as city-wide autonomous transportation systems \cite{HZ_TVT_15}. They form the basis for smart grids, industrial control systems (ICSs) \cite{SK_ProceedingsofIEEE_13}, manufacturing automation in Industry 4.0 \cite{MK_CirpAnnals_16,LF_Industry4_14}, and biomedical and healthcare systems \cite{ZQ_SystemsJournal_15}. A representative example is the supervisory control and data acquisition (SCADA) system, widely used in smart grids, manufacturing industries, and telecom and IT-based systems \cite{HR_IoT_17}.

CPSs are characterized by a distributed and dynamic interaction between computational and physical components \cite{RH_ICT_11}, often comprising a multitude of spatial, temporal, and functional scales. To enable this interface, they incorporate operational technology (OT) in the form of smart sensing and embedded computation, operating with heterogeneous communication systems ranging from edge to cloud computing. The enabler of this distributed new paradigm is the multiscale flow of data across all system stakeholders, which opens the door to data integrity threats on the communication networks that facilitate the exchange of operational information between elements of the system. This increased integration also creates increased risks due to the evolution of the cyber layer and the vulnerabilities that it introduces into the system.

The reliance on OT for integration of the cyber and physical layers exposes systems to various attacks, including data injection attacks (DIAs) \cite{LY_TISSEC_11}, denial of service (DoS), and breaches of customer information \cite{HL_IEEEIoT_17}. These risks are amplified by the increased complexity of CPSs and the distributed nature of CPSs which facilitates decentralized attack constructions that require minimum coordination between the attackers. 
To address these challenges, researchers from various disciplines are contributing to the development of secure CPSs \cite{Advisor_US_24}. Key approaches include control technology to maintain secure and economic operation of CPSs by reacting rapidly and autonomously to cyber threats and the use of AI to detect attacks and mitigate their impact. 

In this setting, information theory offers quantitative measures of the information in the data flows within CPSs, establishing in this manner a mathematical framework to study the impact of data integrity on CPSs. It also enables mathematical descriptions of the fundamental detection and estimation limits, which combined paves the way for innovative approaches to cybersecurity and reliability \cite{ZJ_WonSE_15}.
As CPSs continue to evolve, providing resilience and security guarantees will be crucial for realizing their full potential across numerous applications and industrial sectors \cite{AA_book_20}.
This paper is concerned with modelling and analyzing one of the major data integrity threats to CPSs: DIAs on the state estimation processes of CPSs. 
In particular, we aim to characterize the fundamental limits of the threats and vulnerabilities that decentralized attack strategies pose on CPSs and the effect that different system parameters, such as sensing infrastructure and network topology, have on the overall state estimation processes of the system.

DIAs alter the state estimate of the systems by compromising the data acquisition and communication procedures in CPS without triggering bad data detection mechanisms~\cite{LY_TISSEC_11}. The interaction between the different sub-systems that comprise a CPS poses the threat of decentralized DIAs where the attackers construct DIAs locally wiht the objective of disrupting the state estimate of the system~\cite{IE_TSG_16}. Decentralized DIAs are studied in~\cite{IE_TSG_16,CKKPT_SPM_12,EPP_gsip_14,KP_TSG_11} under the assumption that bad data detection is performed via a residual test and that the state estimation is least squares estimation. Naturally, decentralized DIAs with attackers at different locations lead to the exploration of game theoretic notions~\cite{OM_MIT_94} to devise attack construction strategies. A comprehensive survey on security control and attack detection for CPS is available in~\cite{EB_dynamicGames_19}.
A game-theoretic approach to model and quantify the security of CPS is proposed in~\cite{OA_ComInd_17}. 
The maximum distortion MMSE attacks to cyberphysical energy systems are studied analytically and numerically in~\cite{IE_TSG_16} for both centralized and decentralized settings. 
Therein, it is shown that when the attack disruption is measured in terms of the quadratic error of the state estimation, game theoretic techniques describe the coordination requirements for optimal decentralized attack constructions.

The unprecedented data acquisition capabilities in CPSs facilitates access to large historical data sets that can in turn be distilled into stochastic models of the system. This modelling paradigm is leveraged by system operators for optimized management and control, but also by the attackers. In particular, it allows attackers to formulate DIAs that exploit the statistical structure of the CPS to maximize the disruption and minimize the probability of detection. Moreover, access to historical data implies that data-driven attack constructions~\cite{MD_TSG_22}~\cite{TW_CS_22} are feasible, and therefore, predicting specific attack strategies is challenging in practical terms. 
For that reason, it is essential to study DIAs on CPS in terms of the fundamental limits governing the state estimation procedures and data integrity attacks that encompass a wide range of data availability paradigms. For instance, in~\cite{Ali_SAandMSP_14}, a framework for the analysis of state estimate under structured data is studied. State variables are modelled as a multivariate Gaussian distribution within the Bayesian framework whose second order moments are available to the attackers to construct random attacks in~\cite{OK_TSG_11, SE_SGC_17, SE_TSG_19, YE_SGC_20}. 
Within the Bayesian framework, the attack detection is formulated as the likelihood ratio test~\cite{JN_LRT_33} or alternatively machine learning techniques~\cite{OM_TNNLS_16}. With the exploitation of statistical structure of the CPS, the DIAs are cast within a Bayesian framework that allows information theoretic measures to characterize the corresponding fundamental limits~\cite{TM_ElementsofIT}.
DIAs with information theoretic measures were first studied in~\cite{SE_SGC_17} and generalized in~\cite{SE_TSG_19}. Sparsity constraints are considered in the same setting for independent attacks in~\cite{YE_SGC_20} and generalized in~\cite{YE_TSG_21}. However, the studies in~\cite{SE_SGC_17, SE_TSG_19, YE_SGC_20,YE_TSG_21} are for a single centralized attack setting.
Given the complexity and heterogeneity of CPSs, the characterization of threats and vulnerabilities is challenging in general terms. For that reason, studying DIAs from a fundamental standpoint can provide valuable operational insight on the analysis of the vulnerability and design of corresponding protection mechanisms for CPSs. 

The main contribution of this work is the characterization and analysis of decentralized attack constructions using information theoretic disruption measures in a Bayesian setting. 
The attackers are fully distributed, and thus, game theoretic tools are adopted to determine the optimal local action strategy to achieve their aim. The cost of the attacks is formulated in terms of the information theoretic description of the disruption, that is, the mutual information between the state variables and the observations; and the Kullback-Leibler (KL) divergence between the normal operation observations and the attacked observations. 
This research provides a game result of the information theoretic attacks in decentralized CPS where the optimal behaviour of the attackers are studied analytically and numerically. Specifically, the existence and uniqueness of the Nash Equilibria (NE) of the game are provided.
The main contributions of this paper are:
(1) A novel decentralized information theoretic stealth attack construction is proposed. 
(2) Information measures are used to design cost functions for decentralized random attacks constructions that jointly minimize mutual information and KL divergence. 
(3) The disruption of local and global attack strategies are compared in information theoretic terms, as well as the stealth guarantees that they achieve.
(4) The decentralized attack construction is formulated as a potential game and the resulting Nash Equilibrium (NE) is analyzed.
(5) Three different potential games that emerge from assuming different degrees of attacker knowledge and coordination. The results show that the potential functions are convex and that the NE is unique in all the proposed games. The insight obtained from the properties of the games are distilled to propose best response dynamics that yield practical attack constructions. 

The rest of the paper is organized as follows: In Section~\ref{sec_system_model}, we introduce DIAs to CPS with linearized dynamics within a Bayesian framework. Information theoretic metrics are presented in Section \ref{sec_Information_Theoretic_Metrics}. The game formulation and the Nash Equilibrium are introduced in Section~\ref{sec_game_formulation} and Section~\ref{sec_NE}, respectively. In Section~\ref{sec_numerical_results}, we evaluate the performance of the proposed attack constructions on a representative cyber physical system. The paper closes with conclusions in Section~\ref{sec_conclusion}.

\textbf{Notation:} 
The elementary vector $\ev_i\in\mathds{R}^n$ is a vector of zeros with a one in the $i$-th entry. 
Random variables are denoted by capital letters and their realizations by the corresponding lower case, e.g. $x$ is a realization of the random variable $X$. 
Vectors of $n$ random variables are denoted by a superscript, e.g. $X^n=(X_1,X_2, \ldots, X_n)^{\sf{T}}$ with corresponding realizations denoted by $\xv$. 
The set of positive semidefinite matrices of size $n\times n$ is denoted by $S_{+}^n$. The $n$-dimensional identity matrix is denoted as $\textbf{I}_n$.
Given an $n$-dimensional vector $\muv \in \mathds{R}^n$ and a matrix $\Sigmam \in S_{+}^n$, we denote by $\mathcal{N} (\muv, \Sigmam)$ the multivariate Gaussian distribution of dimension $n$ with mean $\muv$ and covariance matrix $\Sigmam$.
The mutual information between random variables $X$ and $Y$ is denoted by $I(X;Y)$ and the Kullback-Leibler (KL) divergence between the distributions $P$ and $Q$  is denoted by $D(P\| Q)$.
A set is denoted by $\Vc$ and $\Vc^m$ denotes the Cartesian product of $m$ sets $\Vc$.
We denote the number of state variables on a given system by $n$ and the number of the observations by $m$.
We denote the $i$-th entry of vector $X^n$ by $(X^n)_i$ and the Minkowski sum of $X^n$ and $Y^n$ by $X^n \otimes Y^n$.
\section{Cyber-Physical Systems}\label{sec_system_model}

\subsection{Observation Model}

Let $\xv \in{\mathds{R}^n}$ be a vector containing the state variables that describe the state of the CPS. The state variables are observed through the acquisition function $F: {\mathds{R}^n} \rightarrow {\mathds{R}^m}$ that is determined by the system components and topology of the system. A linearized observation model for state estimation corrupted by noise is given by
\be\label{eq:obs_noattack}
Y^m  = \Hm\xv+Z^m,
\ee
where $\Hm \in {\mathds{R}^{m \times n}}$ is the Jacobian of the function $F$ at a given operating point. The observation vector $Y^m \eqdef \left(Y_1,Y_2,\ldots,Y_m\right)$ is corrupted by additive white Gaussian noise (AWGN), c.f.,~\cite{GJ_PSanalysis_1994,AA_PSstateestimation_04}. The noise represented by the random vector $Z^m \eqdef \left(Z_1,Z_2,\ldots,Z_m\right)$ in \eqref{eq:obs_noattack}, is modelled as a multivariate Gaussian distribution, that is,
\be 
\label{EqZ}
Z^m \sim \mathcal{N}(\textbf{0},\sigma^2 \textrm{\textbf{I}}_m),
\ee
where $\textbf{0} = \left( 0,0, \ldots, 0 \right)^{\sf T} \in \mathds{R}^m$ and $\sigma^2 \textrm{\textbf{I}}_m$ are the mean vector and the covariance matrix of the random additive noise.

We adopt a Bayesian estimation framework, and therefore, the state variables are denoted by a random vector $X^n ~\eqdef~\left(X_1,X_2,\ldots,X_n\right)^{\sf T}$. In this study, we assume that the random vector $X^n$ follows a multivariate Gaussian distribution with a zero mean vector and covariance matrix $\Sigmam_{X\!X} \in S_{+}^n$, that is,
\be\label{Sigma_XX}
X^n \sim \mathcal{N}(\textbf{0}, \Sigmam_{X\!X}).
\ee

The rationale for modelling the state variables as given in~\eqref{Sigma_XX} stems from both first principles modelling and data considerations~\cite{SE_TAC_17,GE_spawc_16}. It is important to note that, additionally, the Gaussian assumption induces the maximum entropy distribution over the state variables. In doing so, we assign to the state variables the distribution that minimizes the amount of inductive bias introduced by the random modelling framework.
Consequently, the  vector of measurements $Y^m$ in~\eqref{eq:obs_noattack} follows a multivariate Gaussian distribution $P_{Y^m}$ with a null mean vector and covariance matrix $\Sigmam_{Y\!Y}$, that is,
\be\label{eq:obs_no_attack_distribution}  
 Y^m \sim P_{Y^m} = \mathcal{N}(\textbf{0}, \Sigmam_{Y\!Y}),
\ee
with
\be\label{20220830_4}
\Sigmam_{Y\!Y}\eqdef\Hm \Sigmam_{X\!X}\Hm^{\sf T}+\sigma^2 \textrm{\textbf{I}}_m.
\ee 
Note that for all $i\in\{1,2,\ldots,m\}$, the probability distribution of the $i$-th entry of the random vector $Y^m$ is denoted by $P_{Y_i}$, that is,
\be\label{ith_noatt}
Y_i \sim P_{Y_i} = \mathcal{N}(0, \ev_i^{\sf T}\Sigmam_{Y\!Y}\ev_i).
\ee

\subsection{Decentralized Data Injection Attacks}\label{sec_Decentralized Data Injection Attacks}

The attacker manipulation of the measurements in~\eqref{eq:obs_noattack} can be cast as additive FDIAs~\cite{LY_TISSEC_11}. Within a Bayesian framework in~\eqref{eq:obs_no_attack_distribution}, given the stochastic nature of the measurements, the attacker pursues a random attack construction strategy with the aim of achieving an performance described in terms of the expectation of the resulting joint distribution for the state variables, the observations, and the attack variables.
A random malicious attack, denoted by a random vector $A^m$, compromises the integrity of the vector of measurements in~\eqref{eq:obs_no_attack_distribution}, which yields the vector of compromised measurements as follows: 
\be 
	\label{eq:obs_attack}
	Y_A^m  = \Hm X^n+Z^m + A^m,
\ee
where $A^m \sim P_{A^m}$ and $P_{A^m}$ is the distribution of the random attack vector $A^m$. In this study, $P_{A^m}$ is assumed to be a multivariate Gaussian distribution that satisfies
\be\label{eq:Gauss_attack}
  A^m \sim P_{A^m} = \mathcal{N} (\textbf{0}, \Sigmam_{A\!A}),
\ee
where $\Sigmam_{A\!A}\in S_{+}^m$ is a covariance matrix. 
The argument of the Gaussian assumption in~\eqref{eq:Gauss_attack} is the fact that given a fixed covariance matrix  $\Sigmam_{A\!A}$, a multivariate Gaussian distribution minimizes the mutual information between the state variables $X^n$ and the compromised measurements $Y_A^m$ in~\eqref{eq:obs_attack}~\cite{SI_TIT_13}. The zero mean assumption for $A^m$ follows from the fact
that the mutual information does not depend on the mean of $A^m$~\cite{InriaRR9481}, and therefore, there is no loss of generality.

 Stealth attacks with a unique attacker, referred to as {\it centralized attacks}, are studied in~\cite{SE_TSG_19}. In a decentralized setting, DIAs are constructed by several attackers that have access to the measurements in the system and are referred to as {\it decentralized attacks}~\cite{IE_TSG_16}. In this setting, attackers construct the attack vector $A \in \mathds{R}^m$ with distributively. The aim of one attackers is to independently agree on the attack vector that maximizes the disruption to the system, i.e. information loss induced by the attack, while staying undetected. 
 %
%
Therefore, the resulting attack $A^m$ in~\eqref{eq:Gauss_attack} is modelled as independent across the entries of attack vector, that is, 
\be\label{A_i_def}
A^m \eqdef (A_1, A_2, \ldots, A_m)^{\sf T},
\ee
such that
\be\label{att_independency}
P_{A^m} = \prod_{i=1}^mP_{A_i}
\ee
where, for all $i \in \{1,2,\ldots,m\}$, the probability density function of $P_{A_i}$ is Gaussian with zero mean and variance $v_i \in \left[0,+\infty\right)$, that is,
\be\label{distribution_A_i}
A_i \sim {\cal N} (0, v_i).
\ee
The independence between the entries of the random attack in~\eqref{att_independency} implies that attackers do not require communication between different attack locations. That being the case, the decentralized attack construction is feasible without any coordination requirements, and therefore, poses a particularly interesting threat from a practical standpoint.

Note that with the independence, the covariance matrix in~\eqref{eq:Gauss_attack} is
\be\label{20220519_6}
\Sigmam_{A\!A} = \sum_{i=1}^m v_i \ev_i\ev_i^{\sf T},
\ee 
and as a result the vector of compromised measurements $Y^m_A$ under the random attacks follows a multivariate Gaussian distribution $P_{Y_A^m}$ given by
\begin{equation}\label{eq:obs_attack_distribution} 
  Y^m_A  \sim P_{Y_A^m} = \mathcal{N} (\textbf{0}, \Sigmam_{Y_A\!Y_A})
\end{equation}
with 
\be
\Sigmam_{Y_A\!Y_A} \eqdef \textbf{H}\Sigmam_{X\!X}\textbf{H}^{\sf{T}} + \sigma^2 \textrm{\textbf{I}}_m +  \Sigmam_{A\!A}. 
\ee
From~\eqref{ith_noatt} and~\eqref{distribution_A_i}, for all $i \in \{1,2,\ldots,m\}$, 
the $i$-th entry of the random vector $Y_A^m$ denoted by $Y_{A,i}$ is such that 
\be\label{ith_compromised}
Y_{A_i} \sim P_{Y_{A,i}} = \mathcal{N}(0, \ev_i^{\sf T}\Sigmam_{Y\!Y}\ev_i + v_i),
\ee
where $\Sigmam_{Y\!Y}$ is defined in~\eqref{20220830_4} and $v_i$ is introduced in~\eqref{distribution_A_i}.

The independence assumption of the entries of random attack vector in~\eqref{att_independency} allows the attackers to launch attacks independently, i.e. without any shared randomness. 
Let ${\cal M} \eqdef \{1,2,\ldots,m\}$ be the set of measurement indices in the system and
\be\label{set_K}
\Kc \eqdef \{1,2,\ldots,m\}
\ee
be the set of indices of attacked measurements. Assume that measurement $i$, with $i \in \Mc$, is the only measurement that attacker $i$ can compromise. Let $A_i^m \in \mathds{R}^m$ be the random attack vector produced by attacker $i$ and ${\cal A}_i$ be the set of random attack vectors that can be injected into the system by attacker $i$, with $ i \in {\cal K}$, that is,
\be\label{20220920_1}
\Ac_i = \{A_i^m \in \mathds{R}^m: (A^m_i)_j = 0 \ \textnormal{for all} \ j \neq i \}.
\ee
Hence, from~\eqref{A_i_def} and~\eqref{20220920_1}, for all $i \in \Kc$, the following holds
\be\label{ith_vector}
A^m_i = A_i \otimes \ev_i,
\ee
where $A_i \otimes \ev_i$ is the Kronecker product of $A_i$ and $\ev_i$.
Let the Minkowski sum of ${\cal A}_i$ and ${\cal A}_j$ be denoted by ${\cal A}_i \oplus {\cal A}_j$. For all $A^m \in {\cal A}_i \oplus {\cal A}_j$, there exists a pair of random vectors $(A_i^m, A_j^m) \in {\cal A}_i \times {\cal A}_j$ such that $A^m = A_i^m+A_j^m$. Let the set of all possible random attack vectors be
\be
{\cal A} \eqdef {\cal A}_1 \oplus {\cal A}_2 \oplus \ldots \oplus {\cal A}_m,   
\ee
and the set of complementary random attack vectors with respect to the attacker $i$ be
\be
{\cal A}_{-i} \eqdef {\cal A}_1 \oplus {\cal A}_2 \oplus \ldots {\cal A}_{i-1} \oplus {\cal A}_{i+1} \ldots \oplus {\cal A}_{m-1} \oplus {\cal A}_m.
\ee
Denote the random attack vector constructed by the attacker $i$ by $A_i^m \in {\cal A}_i$. Hence, the resulting random attack vector is $A^m \in \mathds{R}^m$ in~\eqref{eq:obs_attack} and satisfies
\be\label{20220518_2}
A^m = \sum_{i \in \Kc} A_i^m \in {\cal A}.
\ee
Denote also the complementary random attack vector of $A_i^m$ as follows:
\be
A^m_{-i} \eqdef \sum_{j \in {\cal K} \setminus \{i\}} A^m_j \in {\cal A}_{-i}.
\ee

Given the actions by all the other attackers $A^m_{-i}$, the aim of attacker $i$ is to corrupt the measurements by injecting its random attack vector $A_i^m \in {\cal A}_i$ to compromise the data integrity while guaranteeing a low probability of attack detection. For modelling this behaviour, attacker~$i$, with $i \in \Kc$, adopts the cost function $\phi_i$: $\mathds{R}^m \rightarrow \mathds{R}$ to determine whether a random attack vector $A^m_i \in {\cal A}_i$ is more beneficial than another attack vector $B^m_i \in {\cal A}_i$.
In this context, the attack vector $A^m_i$ is preferred to $B^m_i$ if~$\phi_i (A_i^m + A_{-i}^m) < \phi_i (B_i^m + A_{-i}^m)$.  

 \subsection{Attack Detection}
 As a part of security strategies, the system operator implements an attack detection procedure prior to performing state estimation. The detection is cast as a hypothesis test with hypotheses
 \begin{subequations}\label{EqHypTestA}
 	\begin{IEEEeqnarray}{rll} 
 		\mathcal{H}_0&:\textrm{There is no attack,}\\  
 	\mathcal{H}_1&:\textrm{Measurements are compromised}.
 \end{IEEEeqnarray}
\end{subequations}
The following detection frameworks are studied in this paper.
\subsubsection{Joint Detection}\label{sec_joint_detection}
The operator decides if a vector of observations is produced under attack. Specifically, the operator acquires a vector of measurements $\bar{Y}^m$ and decides whether it is produced under attack. 
In a Bayesian framework, the hypothesis test problem in~\eqref{EqHypTestA} becomes 
\begin{subequations}\label{eq:hypoth_attack}
\begin{IEEEeqnarray}{rll}
  \mathcal{H}_0&: \bar{Y}^m\thicksim\Nc(\mathbf{0},\Sigmam_{Y\!Y}), \\
  \mathcal{H}_1&: \bar{Y}^m\thicksim\Nc(\mathbf{0},\Sigmam_{Y_A\!Y_A}).
\end{IEEEeqnarray}
\end{subequations}
A deterministic test $T_{\textnormal{JD}}:\mathds{R}^m\rightarrow\{0,1\}$ is adopted to decide which distribution generates the measurements. Given a measurement vector $\mathbf{\bar{y}}$, let $T_{\textnormal{JD}}(\mathbf{\bar{y}} )=i$ denote the case in which the test decides for $\Hc_i$ upon $\mathbf{\bar{y}}$, with $i \in \{0,1\}$. Therefore, the deterministic test $T_{\textnormal{JD}}$ is
\begin{subequations} 
	\begin{IEEEeqnarray}{rll}
		\mathcal{H}_0&: T_{\textnormal{JD}}(\mathbf{\bar{y}} )=0, \\
		\mathcal{H}_1&: T_{\textnormal{JD}}(\mathbf{\bar{y}} )=1.
	\end{IEEEeqnarray}
\end{subequations}
The performance of the test is assessed in terms of the Type-I error, denoted by $\alpha\eqdef \mathds{P}\left[T_{\textnormal{JD}}\left( \bar{Y}^m \right)=1\right]$, with $\bar{Y}^m$ in \eqref{eq:obs_no_attack_distribution}; 
and the Type-II error, denoted by $\beta\eqdef \mathds{P}\left[T_{\textnormal{JD}}\left(\bar{Y}^m\right)=0 \right]$, with $\bar{Y}^m$ in \eqref{eq:obs_attack_distribution}. 
Given the Type-I error satisfies $\alpha \leq  \alpha'$, with $\alpha'\in[0,1]$, the likelihood ratio test (LRT) is optimal in the sense that it induces the smallest Type-II error $\beta$~\cite{JN_LRT_33}. For the given measurement vector $\bar{\yv}$, the LRT is given by
\begin{equation}\label{lrt}
	T_{\textnormal{JD}}(\mathbf{\bar{y}}) = \mathds{1}_{\left\lbrace L_{\textnormal{JD}}(\mathbf{\bar{y}}) \geqslant \tau \right\rbrace},
\end{equation}
with $\tau\in\mathds{R}_+$ is the decision threshold that meets the constraint on Type-I error set by the operator and $L_{\textnormal{JD}}(\mathbf{\bar{y}})$ the likelihood ratio given by
\begin{equation} 
L_{\textnormal{JD}}(\mathbf{\bar{y}}) = \frac{f_{Y_A^m}(\mathbf{\bar{y}})}{f_{Y^m}(\mathbf{\bar{y}})},
\end{equation}
where the functions $f_{Y_A^m}$ and $f_{Y^m}$ are the probability density function of $Y_A^m$ in~\eqref{eq:obs_attack} and $Y^m$ in~\eqref{eq:obs_noattack}, respectively.
Note that changing the value of $\tau$ is equivalent to changing the tradeoff between Type-I and Type-II errors.

\subsubsection{Local Detection}\label{sec_local_detection}
The operator decides if the local measurement $i$, with $i\in \Mc$, is produced under attack. Note that in this case, instead of observing the vector of all measurements, the operator aims to  establish if the particular measurement under consideration is attacked based on the local information available without giving the rest of the system measurements any consideration.
Specifically, the operator acquires measurement $i$ denoted by $\bar{Y}_i$ and decides whether it is produced under attack. 
Within the Bayesian framework, the hypothesis test problem in~\eqref{EqHypTestA} becomes 
\begin{subequations} 
	\begin{IEEEeqnarray}{rll}
		\mathcal{H}_0&: \bar{Y}_i \thicksim \mathcal{N}(0, \ev_i^{\sf T}\Sigmam_{Y\!Y}\ev_i), \\
		\mathcal{H}_1&: \bar{Y}_i \thicksim \mathcal{N}(0, \ev_i^{\sf T}\Sigmam_{Y\!Y}\ev_i + v_i).
	\end{IEEEeqnarray}
\end{subequations}
A deterministic test $T_{\textnormal{ID}}:\mathds{R} \rightarrow\{0,1\}$ is adopted to determine which distribution generates the measurements. Given measurement $i$ denoted by $ \bar{y}_i$, let $T_{\textnormal{ID}}( \bar{y}_i)=j$ denote the case in which the test decides for $\Hc_j$ upon $\bar{y}_i$, with $j \in \{0,1\}$. Therefore, the deterministic test $T_{\textnormal{ID}}$ is
\begin{subequations} 
	\begin{IEEEeqnarray}{rll}
		\mathcal{H}_0&: T_{\textnormal{ID}}( \bar{y}_i )=0, \\
		\mathcal{H}_1&: T_{\textnormal{ID}}( \bar{y}_i )=1.
	\end{IEEEeqnarray}
\end{subequations}
In this setting, the LRT is given by
\begin{equation}\label{lrt_local}
	T_{\textnormal{ID}}(\bar{y}_i) = \mathds{1}_{\left\lbrace L_{\textnormal{ID}}(\bar{y}_i) \geqslant \tau \right\rbrace},
\end{equation}
with $\tau \in \mathds{R}_+$ the decision threshold and the likelihood ratio $L_{\textnormal{ID}}(\bar{y}_i)$ given by
\begin{equation} 
	L_{\textnormal{ID}}(\bar{y}_i) = \frac{f_{Y_{A,i}}( \bar{y}_i)}{f_{Y_{A,i}}( \bar{y}_i)},
\end{equation}
where the functions $f_{Y_{A,i}}$ and $f_{Y_i}$ are the probability density function of $Y_{A,i}$ in~\eqref{ith_compromised} and $Y_i$ in~\eqref{ith_noatt}, respectively.

\section{Information Theoretic Metrics}\label{sec_Information_Theoretic_Metrics}

The aim of the attacker is twofold: Firstly, it aims to compromise the data integrity of the state variables, and therefore, disrupting all processes that use the measurements such as the state estimator; and secondly, to guarantee a stealthy attack. In this work, instead of assuming a particular state estimation procedure, we consider information theoretic metrics for the disruption caused by the attacks and the detection. In doing so, we aim to establish the fundamental limits of the disruption and detection trade-off that a decentralized attack can achieve.

\subsection{Disruption Measure}
From~\eqref{20220518_2}, the attack vector $A^m$ in~\eqref{eq:obs_attack} is the result of all the attacks $A_i^m$, with $i\in \Kc$. Hence, the disruption caused by the attack $A_i^m \in \Ac_i$ in~\eqref{20220920_1} launched by attacher $i$ can be described by the mutual information between the state variables $X^n$ in~\eqref{Sigma_XX} and the vector of compromised measurements $Y_{A^m}$ in~\eqref{eq:obs_attack}, that is, $I(X^n;Y_{A^m})$ referred to as {\it global mutual information}. 
The rationale for measuring the disruption in terms of global mutual information stems from the case where the attacker considers the amount of information loss jointly caused by the vector of all the compromised measurements.
Note that global mutual information has previously been adopted to capture attack disruption in~\cite{SE_SGC_17,SE_TSG_19,YE_SGC_20}. The analytical expression of global mutual information $I(X^n;Y_{A^m})$ is~\cite[Prop. 2]{SE_TSG_19}
\begin{IEEEeqnarray}{rll}\label{eq_MI}
 I(X^n;Y^m_A) 
	= &\frac{1}{2}\textnormal{log}  \frac{\left| \Sigmam_{Y\!Y}  + \Sigmam_{A\!A}\right|}{\left| \sigma^2\textbf{I}_m + \Sigmam_{A\!A} \right|},
\end{IEEEeqnarray}
where the real $\sigma \in \mathds{R}_+$ is in~\eqref{EqZ}; the matrices $\Sigmam_{Y\!Y}$ and $\Sigmam_{A\!A}$ are in~\eqref{20220830_4} and~\eqref{eq:Gauss_attack}, respectively.

Alternatively, from~\eqref{distribution_A_i},~\eqref{ith_compromised} and~\eqref{ith_vector}, the disruption caused by the attack $A_i^m$ launched by attacher $i$, with $i \in \Kc$, can also be described by the mutual information between the state variables $X^n$ in~\eqref{Sigma_XX} and the compromised measurement $i$ denoted by $Y_{A_i}$ in~\eqref{ith_compromised}, that is, $I(X^n;Y_{A_i})$ referred to as {\it local mutual information}. The rationale of measuring the disruption in terms of local mutual information stems from the case where the attacker considers the amount of information loss locally caused by the measurement $i$ compromised in isolation.
The following proposition presents the analytical expression of the local mutual information.
\begin{prop}\label{prop_local_I}
	For all $i \in \{1,2,\ldots,m\}$, the random variables $X^n$ in~\eqref{Sigma_XX} and $Y_{A_i}$ in~\eqref{ith_compromised} satisfy
	\be\label{eq_local_I}
	I(X^n;Y_{A_i}) = \frac{1}{2}\log \left(1+\frac{ \ev_i^{\sf T}\Hm \Sigmam_{X\!X} \Hm^{\sf T}\ev_i }{\sigma^2 +v_i}\right),
	\ee
	where the matrix $\Hm$ is in~\eqref{eq:obs_noattack}; the real $\sigma \in \mathds{R}_+$ is in~\eqref{EqZ}; the matrix $\Sigmam_{X\!X}$ is in~\eqref{Sigma_XX}; and the real $v_i $ is in~\eqref{distribution_A_i}.
\end{prop}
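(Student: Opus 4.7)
The plan is to use the entropy decomposition of mutual information and exploit the fact that the map from $X^n$ to $Y_{A,i}$ is a scalar Gaussian channel.

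First, I would observe that $Y_{A,i}$, being the $i$-th entry of $Y_A^m$ in \eqref{eq:obs_attack}, admits the representation
\begin{equation}
Y_{A,i} = \ev_i^{\sf T}\Hm X^n + Z_i + A_i,
\end{equation}
where $X^n$, $Z_i$, and $A_i$ are mutually independent, $X^n \sim \mathcal{N}(\zerov,\Sigmam_{X\!X})$ by \eqref{Sigma_XX}, $Z_i \sim \mathcal{N}(0,\sigma^2)$ by \eqref{EqZ}, and $A_i \sim \mathcal{N}(0,v_i)$ by \eqref{distribution_A_i}. The independence of $A_i$ from the remaining entries of $A^m$ is guaranteed by \eqref{att_independency}, while the independence of $X^n$, $Z^m$, and $A^m$ is the standard hypothesis of the model. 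Consequently, $\tilde{N}_i \defines Z_i + A_i$ is independent of $X^n$ and distributed as $\mathcal{N}(0,\sigma^2+v_i)$, so $Y_{A,i}$ corresponds to the output of a scalar additive Gaussian noise channel driven by the jointly Gaussian input $\ev_i^{\sf T}\Hm X^n$.

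Next, I would apply the identity $I(X^n;Y_{A,i}) = h(Y_{A,i}) - h(Y_{A,i}\mid X^n)$. Since $\tilde{N}_i$ is independent of $X^n$, the conditional differential entropy reduces to
\begin{equation}
h(Y_{A,i}\mid X^n) = h(\tilde{N}_i) = \tfrac{1}{2}\log\bigl(2\pi e(\sigma^2+v_i)\bigr).
\end{equation}
For the marginal entropy, $Y_{A,i}$ is a linear combination of independent Gaussians and hence Gaussian itself, with variance
\begin{equation}
\var(Y_{A,i}) = \ev_i^{\sf T}\Hm\Sigmam_{X\!X}\Hm^{\sf T}\ev_i + \sigma^2 + v_i,
\end{equation}
consistent with \eqref{ith_compromised} combined with \eqref{20220830_4}. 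Therefore $h(Y_{A,i}) = \tfrac{1}{2}\log\bigl(2\pi e\,\var(Y_{A,i})\bigr)$.

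Subtracting the two entropies, the $2\pi e$ factors cancel and the ratio simplifies to
\begin{equation}
I(X^n;Y_{A,i}) = \tfrac{1}{2}\log\frac{\ev_i^{\sf T}\Hm\Sigmam_{X\!X}\Hm^{\sf T}\ev_i + \sigma^2 + v_i}{\sigma^2 + v_i},
\end{equation}
which equals the claimed expression. No step is genuinely difficult; the only point requiring care is tracking that the noise-plus-attack term $Z_i+A_i$ is truly independent of $X^n$ with variance $\sigma^2+v_i$, which follows directly from the independence assumptions in \eqref{EqZ}, \eqref{Sigma_XX}, \eqref{eq:Gauss_attack}, and the diagonal structure \eqref{20220519_6} induced by \eqref{att_independency}. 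One could alternatively derive the same result by specializing the joint Gaussian mutual information formula or by taking \eqref{eq_MI} with $\Sigmam_{A\!A}$ replaced by its one-dimensional marginal projection onto $\ev_i$, but the direct entropy decomposition is the cleanest route.
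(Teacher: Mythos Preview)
Your proof is correct but proceeds by a different route than the paper. The paper forms the joint vector $W^{n+1}=(X^n,Y_{A,i})$, writes out its block covariance
\[
\Sigmam=\begin{bmatrix}\Sigmam_{X\!X} & \Sigmam_{X\!X}\hv_i^{\sf T}\\ \hv_i\Sigmam_{X\!X} & \ev_i^{\sf T}\Sigmam_{Y\!Y}\ev_i+v_i\end{bmatrix},
\]
evaluates $I(X^n;Y_{A,i})$ directly from the expectation of the log-ratio of Gaussian densities to obtain $\tfrac{1}{2}\log\bigl(|\Sigmam_{X\!X}|(\ev_i^{\sf T}\Sigmam_{Y\!Y}\ev_i+v_i)/|\Sigmam|\bigr)$, and then reduces $|\Sigmam|$ via the Schur-complement determinant identity to $|\Sigmam_{X\!X}|(\sigma^2+v_i)$. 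Your argument instead recognizes the scalar additive Gaussian channel $Y_{A,i}=\ev_i^{\sf T}\Hm X^n+\tilde{N}_i$ and uses the entropy decomposition $h(Y_{A,i})-h(Y_{A,i}\mid X^n)$, which avoids the block-matrix machinery entirely. Your route is shorter and more transparent for this scalar case; the paper's approach is heavier but would generalize more mechanically to a vector of observed coordinates, since it never relies on the output being one-dimensional.
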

\begin{proof}
	See Appendix A.
\end{proof}
\subsection{Detection Measure}
The Chernoff-Stein Lemma~\cite[Th. 11.7.3]{TM_ElementsofIT} states that the probability of detection of the random attack is characterized by the Kullback-Leibler (KL) divergence. The KL divergence is a measure of the difference between two distributions and in our setting it quantifies the deviation of the distribution under attacks with respect to the distribution under normal operation. The minimization of KL divergence is equivalent to the minimization of the asymptotic probability of attack detection~\cite{TM_ElementsofIT}. Note that KL divergence has previously been adopted to analyze attack detection in~\cite{SE_TSG_19} and~\cite{YE_SGC_20}.

Specifically, attacker $i$, with $i \in \Kc$, operates under the joint attack detection setting described in Section~\ref{sec_joint_detection}, that is, attacker $i$ controls the attack detection induced by the vector of compromised measurements $Y_A^m$. For the hypothesis test problem in~\eqref{lrt}, a small value
of the KL divergence between $P_{Y_A^m}$ in~\eqref{eq:obs_attack_distribution} and $P_{Y^m}$ in~\eqref{eq:obs_no_attack_distribution} denoted by $D(P_{Y_A^m}\|P_{Y^m})$ implies that the attack is unlikely to be detected by the LRT in~\eqref{lrt}~\cite[Th. 11.8.3]{TM_ElementsofIT}.
The analytical expression for the global KL divergence $D(P_{Y_A^m}||P_{Y^m})$ in our setting is~\cite[Prop.~1]{SE_TSG_19}
\begin{IEEEeqnarray}{rll}\label{eq_KL} 
	\!\!\!\!\!\!\!\!\! D( P_{Y_A^m}\|P_{Y^m}\!) 
	\!=\!   \frac{1}{2}\!\left( \!\log\frac{\left| \Sigmam_{Y\!Y} \right|}{\left|\Sigmam_{Y_A\!Y_A}\right|} +  \textnormal{tr} \left(\Sigmam_{Y\!Y}^{-1} \Sigmam_{A\!A} \right)\!\! \right)\!\!,
\end{IEEEeqnarray}
where the matrices $\Sigmam_{Y\!Y}$ and $\Sigmam_{A\!A}$ are in~\eqref{20220830_4} and \eqref{eq:Gauss_attack}, respectively.

Alternatively, the system operator can adopt the local attack detection described in Section~\ref{sec_local_detection}. In this case, attacker $i$, with $i\in\Kc$, only considers the attack detection based on measurement $i$.
For the hypothesis test problem in~\eqref{lrt_local}, a small value
of the KL divergence between $P_{Y_{A,i}}$ in~\eqref{ith_compromised} and $P_{Y_i}$ in~\eqref{ith_noatt}, denoted by $D(P_{Y_{A,i}}\|P_{Y_i})$ implies that the attack is unlikely to be detected by the LRT in~\eqref{lrt_local}~\cite[Th. 11.8.3]{TM_ElementsofIT}. 
The following proposition characterizes the KL divergence in this local detection setting.
\begin{prop}\label{prop_local_D}
	For all $i \in \{1,2,\ldots,m\}$, the random variables $Y_{A_i}$ in~\eqref{ith_compromised} and $Y_i$ in~\eqref{ith_noatt} satisfy
	\be\label{eq_local_D}
	\begin{split}
		D(P_{Y_{A, i}} \| P_{Y_i} )  
		= &\frac{1}{2}  \left(\frac{v_i}{ \ev_i^{\sf T}\Sigmam_{Y\!Y}  \ev_i  }  + \log \frac{\ev_i^{\sf T}\Sigmam_{Y\!Y}  \ev_i}{\ev_i^{\sf T}\Sigmam_{Y\!Y}  \ev_i+ v_i    } \right),
	\end{split}
	\ee
	where the matrix $\Sigmam_{Y\!Y}$ is defined in~\eqref{20220830_4}, and the $v_i \in \mathds{R}$ is defined in~\eqref{distribution_A_i}.
\end{prop}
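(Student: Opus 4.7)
The plan is to exploit the fact that both $P_{Y_i}$ and $P_{Y_{A,i}}$ are univariate zero-mean Gaussian distributions and simply invoke (or quickly re-derive) the closed-form expression for the KL divergence between two such distributions. From~\eqref{ith_noatt} and~\eqref{ith_compromised}, I have $Y_i \sim \mathcal{N}(0, \sigma_0^2)$ and $Y_{A,i} \sim \mathcal{N}(0, \sigma_0^2 + v_i)$, where I abbreviate $\sigma_0^2 \eqdef \ev_i^{\sf T}\Sigmam_{Y\!Y}\ev_i$ for convenience.

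First I would recall the standard identity: for two univariate Gaussians $P = \mathcal{N}(\mu_P, \sigma_P^2)$ and $Q = \mathcal{N}(\mu_Q, \sigma_Q^2)$,
\begin{equation}
D(P\|Q) = \frac{1}{2}\left( \log \frac{\sigma_Q^2}{\sigma_P^2} + \frac{\sigma_P^2 + (\mu_P - \mu_Q)^2}{\sigma_Q^2} - 1 \right).
\end{equation}
If the paper prefers a self-contained derivation, this identity can be obtained in one line by writing $D(P\|Q) = \int f_P(y)\log\frac{f_P(y)}{f_Q(y)}\,dy$, substituting the Gaussian densities, and evaluating $\E_P[Y^2] = \sigma_P^2 + \mu_P^2$; this is the kind of routine Gaussian integral computation that need not be carried out in detail.

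Next I would specialize this identity to $P = P_{Y_{A,i}}$ and $Q = P_{Y_i}$, so that $\mu_P = \mu_Q = 0$, $\sigma_P^2 = \sigma_0^2 + v_i$, and $\sigma_Q^2 = \sigma_0^2$. The expression collapses to
\begin{equation}
D(P_{Y_{A,i}}\|P_{Y_i}) = \frac{1}{2}\left( \log \frac{\sigma_0^2}{\sigma_0^2 + v_i} + \frac{\sigma_0^2 + v_i}{\sigma_0^2} - 1 \right),
\end{equation}
and the last two terms simplify to $v_i/\sigma_0^2$. Substituting back $\sigma_0^2 = \ev_i^{\sf T}\Sigmam_{Y\!Y}\ev_i$ yields exactly~\eqref{eq_local_D}.

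There is no real obstacle here; the result is essentially a corollary of the scalar-Gaussian KL formula together with the variances already identified in~\eqref{ith_noatt} and~\eqref{ith_compromised}. The only minor care point is to check that the sign and ordering of terms in the final boxed expression match~\eqref{eq_local_D}, i.e.\ that the $\log$ term is $\log\bigl(\sigma_0^2/(\sigma_0^2+v_i)\bigr)$ and not its negative; writing out the derivation once with the densities makes this unambiguous. For completeness the proof may also remark that $D(P_{Y_{A,i}}\|P_{Y_i}) \geq 0$ with equality iff $v_i = 0$, which follows from the elementary inequality $x - 1 \geq \log x$ applied to $x = (\sigma_0^2 + v_i)/\sigma_0^2$, and serves as a sanity check consistent with the intuition that a larger injected variance $v_i$ makes the attacked measurement distribution more distinguishable from the nominal one.
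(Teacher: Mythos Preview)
Your proposal is correct and takes essentially the same approach as the paper: the paper's proof in Appendix~B writes out the expectation $\mathds{E}_{P_{Y_{A,i}}}[\log(f_{Y_{A,i}}/f_{Y_i})]$ explicitly with the Gaussian densities and simplifies using $\mathds{E}_{P_{Y_{A,i}}}[x^2] = \ev_i^{\sf T}\Sigmam_{Y\!Y}\ev_i + v_i$, which is exactly the self-contained derivation of the scalar-Gaussian KL formula you invoke. The only difference is presentational---you cite the standard identity and specialize, whereas the paper rederives it from the definition---so your argument is a faithful (and slightly more compact) version of the paper's proof.
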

\begin{proof}
	See Appendix B.
\end{proof}

\section{Game Formulation}\label{sec_game_formulation}
The decentralized attacks described in Section~\ref{sec_Decentralized Data Injection Attacks} poses a framework for the exploration of game theoretic techniques~\cite{potentialgame}.
This section introduces three games with different aims of the attackers. For all $p \in \{1,2,3\}$, game~$p$, denoted by $\Gc_p$, is a game in normal form:
\be\label{gameformula_overall}
\gameNF_p = \left({\Kc}, \Vc , \{\phi^{(p)}_i\}_{i \in {\cal K}}\right),
\ee
where $\Kc$ is the set of players in~\eqref{set_K}, the set $\Vc \eqdef \left[0, +\infty \right)$ is the set of actions of player $i \in \Kc$, that is, the set of $v_i$ in~\eqref{distribution_A_i}, and $\phi^{(p)}_i: \mathds{R}^m \rightarrow \mathds{R}$ is a cost function such that for all $\vv \eqdef \left(v_1,v_2,\ldots,v_m\right) \in \Vc^m$, we define
	\be\label{game1_cost}
		\begin{aligned}
	\phi^{(1)}_i(\vv) \eqdef &I(X^n;Y_A^m) +\lambda D(P_{Y_A^m}\|P_{Y^m}) \\
	=&\frac{1}{2}(1-\lambda)\log \left|\Sigmam_{Y\!Y}  +  \sum_{i \in \Kc} v_i\ev_i \ev_i^{\sf T}\right|  \\
	&-  \frac{1}{2}\log \left|\sigma^2 \textbf{I}_m  + \sum_{i \in \Kc} v_i\ev_i \ev_i^{\sf T}\right| \\
	 &+ \frac{1}{2} \lambda \left( \log \left|\Sigmam_{Y\!Y}\right| + \textnormal{tr}\left(\Sigmam_{Y\!Y}^{-1}\sum_{i \in \Kc} v_i\ev_i \ev_i^{\sf T}\right)\right),\\
  \end{aligned}
  \ee
  	\be\label{game2_cost}
		\begin{aligned}
	\phi^{(2)}_i(\vv) \eqdef &I(X^n; Y_{A_i}) +\lambda D(P_{Y_A^m}\|P_{Y^m}) \\
	=& \frac{1}{2}\log \left(1+\frac{ \ev_i^{\sf T}\Hm \Sigmam_{X\!X} \Hm^{\sf T}\ev_i }{\sigma^2 +v_i}\right) \\
	&+ \frac{1}{2} \lambda\log \frac{\left| \Sigmam_{Y\!Y}  \right|}{\left| \Sigmam_{Y\!Y} + \sum_{j \in \Kc} v_j \ev_j \ev_j^{\sf T}\right|} \\
	&+ \frac{1}{2} \lambda  \textnormal{tr}\left(\Sigmam_{Y\!Y}^{-1}\sum_{j \in \Kc} v_j\ev_j \ev_j^{\sf T}\right),
   \end{aligned}
  \ee
    	\be\label{game3_cost}
		\begin{aligned}
	\phi^{(3)}_i(\vv) \eqdef &I(X^n; Y_{A}^m) +\lambda D(P_{Y_{A, i}} \| P_{Y_i} ) \\
	=&\frac{1}{2} \log \frac{\left|\Sigmam_{Y\!Y}  + \sum_{j \in \Kc} v_j\ev_j \ev_j^{\sf T}\right|}{\left|\sigma^2\textbf{I}_m +\sum_{j \in \Kc}v_j\ev_j \ev_j^{\sf T}\right|} \\
	&+ \frac{1}{2} \lambda \left( \frac{v_i}{\ev_i^{\sf T}\Sigmam_{Y\!Y} \ev_i} + \log\frac{\ev_i^{\sf T}\Sigmam_{Y\!Y} \ev_i}{\ev_i^{\sf T}\Sigmam_{Y\!Y} \ev_i+v_i} \right).
  \end{aligned}
  \ee
In the game $\Gc_1$, the cost function for attacker $i \in \Kc$ is $\phi^{(1)}_i$ in~\eqref{game1_cost}. In this game, the aim of the attackers is to jointly minimize the mutual information $I(X^n;Y_{A^m})$ in~\eqref{eq_MI} and KL divergence $D(P_{Y_A^m}\|P_{Y^m})$ in~\eqref{eq_KL}. 
Note that the weighting parameter $\lambda $ determines the tradeoff between the disruption caused by the attacks and the probability of attack detection. The rationale of the aim is to minimize the mutual information and the KL divergence globally.
Similarly, in the game $\Gc_2$, the attacker $i$ aims at minimizing cost function $\phi^{(2)}_i$ in~\eqref{game2_cost} defined as a weighted sum of mutual information $I(X^n; Y_{A_i})$ in~\eqref{eq_local_I} and KL divergence $D(P_{Y_A^m}\|P_{Y^m})$ in~\eqref{eq_KL}. The rationale of the aim is to minimize the mutual information reduction caused by compromising local measurement but minimizing the KL divergence globally.  
In the game $\Gc_3$, the attacker $i$ aims to minimize cost function $\phi^{(3)}_i$ in~\eqref{game3_cost} defined as a weighted sum of mutual information $I(X^n; Y_A^m)$ in~\eqref{eq_MI} and KL divergence $D(P_{Y_{A, i}} \| P_{Y_i} )$ in~\eqref{eq_local_D}. The rationale of the aim is to minimize the mutual information globally but minimizing the KL divergence between locally compromised measurement and measurement without attack.
Note that~\eqref{game1_cost},~\eqref{game2_cost} and~\eqref{game3_cost} follow from plugging in the corresponding terms from~\eqref{eq_MI},~\eqref{eq_local_I},~\eqref{eq_KL} and~\eqref{eq_local_D} into~\eqref{game1_cost},~\eqref{game2_cost} and~\eqref{game3_cost}, respectively, and replacing $\Sigmam_{A\!A}$ with $\sum_{i\in \Kc}  v_i \ev_i\ev_i^{\sf T}$ as shown in~\eqref{20220519_6} and~\eqref{set_K}.

 	
\subsection{Convexity}
The following proposition characterizes the convexity of the cost functions $\phi_i^{(p)}$ in~\eqref{game1_cost}~\eqref{game2_cost}~\eqref{game3_cost}, with~$i ~\in~\Kc$ and $p~\in\{1,2,3\}$.
 \begin{prop}\label{prop_game1_cov}
	Let~$\lambda \geq 1$. For all~$i \in~\Kc$ and for all~$p~\in~\{1,2,3\}$, the cost functions~$\phi^{(p)}_i$ are convex in $v_i$.
\end{prop}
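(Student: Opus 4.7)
The plan is to fix an attacker index $i \in \Kc$ and, for each $p \in \{1,2,3\}$, treat $\phi_i^{(p)}$ as a scalar function of $v_i$ alone, with $\{v_j\}_{j \neq i}$ held as constants. The essential tool is the matrix determinant lemma applied to the rank-one update $v_i \ev_i \ev_i^{\sf T}$: for any positive definite $M$,
\[
\log \det(M + v_i \ev_i \ev_i^{\sf T}) \;=\; \log \det M \;+\; \log\bigl(1 + v_i\, \ev_i^{\sf T} M^{-1} \ev_i\bigr),
\]
which reduces every $v_i$-dependent log-determinant appearing in~\eqref{game1_cost}--\eqref{game3_cost} to a scalar expression of the form $\log(1+av_i)$ with $a>0$ built from the remaining parameters. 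All trace terms, and the explicit summand $v_i/\ev_i^{\sf T}\Sigmam_{Y\!Y}\ev_i$ in $\phi_i^{(3)}$, are affine in $v_i$ and contribute nothing to the second derivative, so the question collapses to checking the sign of second derivatives of terms of the forms $\alpha\log(1+av_i)$ and $\tfrac{1}{2}\log\!\bigl(1+c/(\sigma^2+v_i)\bigr)$.

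For $\phi_i^{(1)}$, this reduction yields the $v_i$-dependent piece $\tfrac{1-\lambda}{2}\log(1+a_1 v_i) - \tfrac{1}{2}\log(1+a_2 v_i)$, up to an affine term, where $a_1 = \ev_i^{\sf T}(\Sigmam_{Y\!Y}+\sum_{j \neq i} v_j\ev_j\ev_j^{\sf T})^{-1}\ev_i$ and $a_2 = \ev_i^{\sf T}(\sigma^2\textbf{I}_m+\sum_{j \neq i} v_j\ev_j\ev_j^{\sf T})^{-1}\ev_i$, both strictly positive. Differentiating twice produces $\tfrac{\lambda-1}{2}\,a_1^2/(1+a_1 v_i)^2 + \tfrac{1}{2}\,a_2^2/(1+a_2 v_i)^2$, which is nonnegative precisely because $\lambda \geq 1$; this is where the hypothesis is used sharply. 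For $\phi_i^{(2)}$, the first term rewrites as $\tfrac{1}{2}[\log(\sigma^2+v_i+c_i)-\log(\sigma^2+v_i)]$ with $c_i = \ev_i^{\sf T}\Hm\Sigmam_{X\!X}\Hm^{\sf T}\ev_i \geq 0$, whose second derivative $\tfrac{1}{2}[(\sigma^2+v_i)^{-2}-(\sigma^2+v_i+c_i)^{-2}]$ is visibly nonnegative; the KL term reduces to $-\tfrac{\lambda}{2}\log(1+a_1 v_i)$, convex in $v_i$ for any $\lambda \geq 0$; and the trace term is linear.

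The delicate case is $\phi_i^{(3)}$, where the log-ratio reduces to $\tfrac{1}{2}\log(1+a_1 v_i) - \tfrac{1}{2}\log(1+a_2 v_i)$ but carries \emph{no} $\lambda$-weight, so convexity must come from a structural property of the system. Its second derivative equals $\tfrac{1}{2}[a_2^2/(1+a_2 v_i)^2 - a_1^2/(1+a_1 v_i)^2]$, which is nonnegative if and only if $a_1 \leq a_2$, since the map $x \mapsto x/(1+xv_i)$ is increasing on $[0,\infty)$ for $v_i \geq 0$. The inequality $a_1 \leq a_2$ is obtained from $\Sigmam_{Y\!Y} - \sigma^2 \textbf{I}_m = \Hm\Sigmam_{X\!X}\Hm^{\sf T} \succeq 0$: adding the common PSD matrix $\sum_{j\neq i} v_j\ev_j\ev_j^{\sf T}$ preserves the ordering, the two resulting matrices are positive definite, inversion reverses the inequality, and sandwiching with $\ev_i$ gives the scalar comparison. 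The remaining $\lambda$-weighted bracket of $\phi_i^{(3)}$ has the explicit form $\tfrac{\lambda}{2}[v_i/d_i + \log d_i - \log(d_i+v_i)]$ with $d_i = \ev_i^{\sf T}\Sigmam_{Y\!Y}\ev_i$, whose second derivative $\tfrac{\lambda}{2}(d_i+v_i)^{-2}$ is trivially nonnegative. I expect this positive-semidefinite comparison $a_1 \leq a_2$, converting a matrix inequality through inversion into a scalar inequality, to be the only structural step; everything else is routine scalar calculus.
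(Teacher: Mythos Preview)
Your proof is correct, and for $p=1,2$ it is essentially the paper's argument in scalar form: the paper invokes concavity of $\log\det(\cdot)$ on $\Sm_{++}^m$ along the affine map $v_i \mapsto M + v_i\ev_i\ev_i^{\sf T}$, which is exactly what your matrix determinant lemma reduction $\log\det(M+v_i\ev_i\ev_i^{\sf T}) = \log\det M + \log(1+a v_i)$ makes explicit.

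For $p=3$ the two routes genuinely diverge. The paper does \emph{not} split the mutual information term as $\tfrac{1}{2}\log(1+a_1 v_i)-\tfrac{1}{2}\log(1+a_2 v_i)$; instead it factors $|\sigma^2\textbf{I}_m+\Sigmam_{A\!A}|$ out of the ratio to obtain $\log\bigl|\Hm\Sigmam_{X\!X}\Hm^{\sf T}(\sigma^2\textbf{I}_m+\Sigmam_{A\!A})^{-1}+\textbf{I}_m\bigr|$, isolates the $v_i$-dependence through a Sherman--Morrison expansion (the appendix computation), and shows directly that the resulting second derivative $\dfrac{a_i(2(\sigma^2+v_i)+a_i)}{(\sigma^2+v_i)^2(\sigma^2+v_i+a_i)^2}$ is positive because the scalar $a_i=\ev_i^{\sf T}\Am^{-1}\Hm\Sigmam_{X\!X}\Hm^{\sf T}\ev_i$ is positive. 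Your argument replaces this entire derivative calculation by the structural observation $\Hm\Sigmam_{X\!X}\Hm^{\sf T}\succeq 0 \Rightarrow \Sigmam_{Y\!Y}+\sum_{j\neq i}v_j\ev_j\ev_j^{\sf T}\succeq \sigma^2\textbf{I}_m+\sum_{j\neq i}v_j\ev_j\ev_j^{\sf T}$, whose inversion and contraction against $\ev_i$ give $a_1\le a_2$, which via the monotonicity of $x\mapsto x/(1+xv_i)$ yields the sign of the second derivative. This is cleaner and more conceptual: it identifies the positive semidefiniteness of $\Hm\Sigmam_{X\!X}\Hm^{\sf T}$ as the single structural reason convexity holds in the unweighted case, and it sidesteps the Sherman--Morrison bookkeeping entirely. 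The paper's computation, on the other hand, delivers the exact closed-form first and second derivatives, which it later reuses to write the best-response conditions in Theorem~\ref{Theorem_BR_game3}.
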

\begin{proof}
	The proof is divided into three parts. For all~$p \in \{1,2,3\}$, part $p$ presents the proof of convexity of $\phi^{(p)}_i$. 
	
	The first part is as follows. From~\cite[Sec. 3.1.5]{boyd_cvx} and the fact that $\Sigmam_{Y\!Y} + \sum_{i \in \Kc} v_i\ev_i \ev_i^{\sf T} \in~\Sm_{++}^m$ and $\sigma^2 \textbf{I}_m  + \sum_{i \in \Kc} v_i\ev_i \ev_i^{\sf T}   \in ~\Sm_{++}^m$, the terms $\textrm{log} \left| \Sigmam_{YY} +\sum_{i \in \Kc} v_i\ev_i \ev_i^{\sf T} \right|$ and $\log \left|\sigma^2 \textbf{I}_m  + \sum_{i \in \Kc} v_i\ev_i \ev_i^{\sf T}\right|$ are concave. Therefore, the terms 
	\be
	\left(1-\lambda\right)\textrm{log} \left| \Sigmam_{YY} +\sum_{i \in \Kc} v_i\ev_i \ev_i^{\sf T} \right|,
	\ee
	 and $-\log \left|\sigma^2 \textbf{I}_m  + \sum_{i \in \Kc} v_i\ev_i \ev_i^{\sf T}\right|$ in~\eqref{game1_cost} are convex. Given that the trace is a linear operator, the term $\textnormal{tr}\left(\Sigmam_{Y\!Y}^{-1}\sum_{i \in \Kc} v_i\ev_i \ev_i^{\sf T}\right)  $ is linear with respect to $v_i$. From~\cite[Sec. 3.2.1]{boyd_cvx}, the non-negative weighted sums of convex functions is convex, that is, the cost function $\phi_i^{(1)} $ is convex. 
	
The second part is as follows. Note that the term $\log \left(1+\frac{ \ev_i^{\sf T}\Hm \Sigmam_{X\!X} \Hm^{\sf T}\ev_i }{\sigma^2 +v_i}\right)$ is convex in $v_i$. From~\cite[Sec.~3.1.5]{boyd_cvx} and the fact that $\Sigmam_{Y\!Y} + \sum_{j \in \Kc} v_j\ev_j \ev_j^{\sf T} \in \Sm_{++}^m$, the term $\textrm{log} \left| \Sigmam_{YY} +\sum_{j \in \Kc} v_j\ev_j \ev_j^{\sf T} \right|$ is concave. Hence, the term $-\log  \left| \Sigmam_{Y\!Y} + \sum_{j \in \Kc} v_j \ev_j \ev_j^{\sf T}\right| $ is convex. Given that trace is a linear operator, the term $\textnormal{tr}\left(\Sigmam_{Y\!Y}^{-1}\sum_{j \in \Kc} v_j\ev_j \ev_j^{\sf T}\right)  $ is linear with respect to $v_i$. From~\cite[Sec. 3.2.1]{boyd_cvx}, the non-negative weighted sums of convex functions is convex, that is, the cost function $\phi_i^{(2)} $ in~\eqref{game2_cost} is convex in $v_i$.

The third part is as follows. Note that the equality in~\eqref{game3_cost} can be rewritten as follows:

\begin{IEEEeqnarray}{rll}
	\nonumber
	\!\!\!\!  \phi^{(3)}_i(\vv) 
	=&\frac{1}{2} \log \frac{\left|\Hm\Sigmam_{XX}\Hm^{\sf T}+ \sigma^2\textbf{I}_m+ \sum_{j \in \Kc} v_j\ev_j \ev_j^{\sf T}\right|}{\left|\sigma^2\textbf{I}_m +\sum_{j \in \Kc}v_j\ev_j \ev_j^{\sf T}\right|}\label{20220901_1} \\
	&+ \frac{1}{2} \lambda \left( \frac{v_i}{\ev_i^{\sf T}\Sigmam_{Y\!Y} \ev_i} + \log\frac{\ev_i^{\sf T}\Sigmam_{Y\!Y} \ev_i}{\ev_i^{\sf T}\Sigmam_{Y\!Y} \ev_i+v_i} \right)\\
	\nonumber
	=&\frac{1}{2} \log \left|\Hm\Sigmam_{X\!X}\Hm^{\sf}\left(\sigma^2\textbf{I}_m + \sum_{j \in \Kc} v_j\ev_j \ev_j^{\sf T}\right)^{-1}\!\!\!\!+\!\!\textbf{I}_m\right| \label{20220728_10} \\
		 &+ \frac{1}{2} \lambda \left( \frac{v_i}{\ev_i^{\sf T}\Sigmam_{Y\!Y} \ev_i} + \log\frac{\ev_i^{\sf T}\Sigmam_{Y\!Y} \ev_i}{\ev_i^{\sf T}\Sigmam_{Y\!Y} \ev_i+v_i} \right)\\
		 \nonumber
		 =& \frac{1}{2} \log  \left| \frac{1}{\sigma^2 + v_i} \Hm\Sigmam_{X\!X}\Hm^{\sf T} \ev_i \ev_i^{\sf T} \right. \label{20220830_6} \\
		 \nonumber
		& \ \ \ \  \left.+ \Hm\Sigmam_{X\!X}\Hm^{\sf T} \sum_{j \in \Kc \setminus \{i\}}^m \frac{1}{\sigma^2 + v_j} \ev_j \ev_j^{\sf T}    + \textnormal{\textbf{I}}_m\right|  \\ 
		& +  \frac{1}{2} \lambda \left( \frac{v_i}{\ev_i^{\sf T}\Sigmam_{Y\!Y} \ev_i} + \log\frac{\ev_i^{\sf T}\Sigmam_{Y\!Y} \ev_i}{\ev_i^{\sf T}\Sigmam_{Y\!Y} \ev_i+v_i} \right),
\end{IEEEeqnarray} 
where the equality in~\eqref{20220901_1} follows from plugging~\eqref{20220830_4} into the first term of~\eqref{game3_cost}; the equality in~\eqref{20220728_10} follows from canceling the term $\left| \sigma^2\textbf{I}_m + \sum_{j \in \Kc} v_j\ev_j \ev_j^{\sf T}\right|$ from the numerator and denominator of the first term in~\eqref{20220901_1}.
	Let the matrix $\Am \eqdef \Hm\Sigmam_{X\!X}\Hm^{\sf T} \sum_{j \in {\cal K} \setminus \{i\}} \dfrac{1}{\sigma^2 + v_j} \ev_j \ev_j^{\sf T}    + \textbf{I}_m$ and $a_i~\eqdef~\ev_i^{\sf T}\Am^{-1}\Hm\Sigmam_{X\!X}\Hm^{\sf T}\ev_i $. For all $i \in \Kc$, the following holds for the first term in~\eqref{20220830_6}
\begin{IEEEeqnarray}{rll}
			 &\frac{\partial }{\partial v_i} \textnormal{log} \left|\dfrac{1}{\sigma^2 + v_i} \Hm\Sigmam_{X\!X}\Hm^{\sf T}\ev_i \ev_i^{\sf T} + \Am\right|\label{20220613_1}\\
			\nonumber
			 =& - \frac{a_i}{(\sigma^2+v_i)\left(\sigma^2+v_i+a_i\right)},  \ \ \mbox{and}\\
			&\frac{\partial^2 }{\partial v_i^2} \textnormal{log} \left|\frac{1}{\sigma^2 + v_i} \Hm\Sigmam_{X\!X}\Hm^{\sf T}\ev_i \ev_i^{\sf T} + \Am\right| \label{20220613_2}\\
			\nonumber =&\frac{a_i\left(2(\sigma^2+v_i)+a_i\right)}{(\sigma^2+v_i)^2\left(\sigma^2+v_i+a_i\right)^2},
\end{IEEEeqnarray} 
where the equalities in~\eqref{20220613_1} and~\eqref{20220613_2} follow from Appendix C. Note that $a_i > 0$ and $v_i \geq 0$. Hence, the first derivative in~\eqref{20220613_1} is negative and the second derivative in~\eqref{20220613_2} is positive. Therefore, the term $\textnormal{log} \left| \frac{1}{\sigma^2 + v_i} \Hm\Sigmam_{X\!X}\Hm^{\sf T} \ev_i \ev_i^{\sf T}  \right.\\
\left.+ \Hm\Sigmam_{X\!X}\Hm^{\sf T} \sum_{j \in \Kc \setminus \{i\}}^m \frac{1}{\sigma^2 + v_j} \ev_j \ev_j^{\sf T}    + \textnormal{\textbf{I}}_m\right|$ in~\eqref{20220830_6} is convex in $v_i$. The term $\frac{v_i}{\ev_i^{\sf T}\Sigmam_{Y\!Y} \ev_i}$ is linear with respect to $v_i$ and the term $\log\frac{\ev_i^{\sf T}\Sigmam_{Y\!Y} \ev_i}{\ev_i^{\sf T}\Sigmam_{Y\!Y} \ev_i+v_i}$ is convex in $v_i$. From~\cite[Sec. 3.2.1]{boyd_cvx}, the non-negative weighted sums of convex functions is convex, that is, the cost function $\phi_i^{(3)} $ in~\eqref{game3_cost} is convex in $v_i$.
This completes the proof.
\end{proof}
 
\subsection{Best Responses}\label{sec_bestresponse}
From the game formulation in~\eqref{gameformula_overall}, the set of actions $\Vc$ of player $i \in \Kc$ is the set of admissible variances of random variables $A_i$~in~\eqref{distribution_A_i} with $i \in \Kc$.
 The underlying assumption in the following is that, for all $i \in \Kc$, given an action profile $\vv_{-i}~\eqdef~\left(v_1,v_2,\ldots,v_{i-1},v_{i+1},\ldots,v_{m-1},v_m\right) \in \Vc^{m-1}$ formed by players $j$, with $j\in \Kc \setminus \{i\}$, player $i$ adopts an action $v_i$ such that the cost function $\phi^{(p)}_i \left(\vv \right)$ is minimized, that is,
 \be\label{BR_v_i_overall}
 v_i  \in \textrm{BR}_i^{(p)}\left(\vv_{i-1}\right),
 \ee  
 where the correspondence $\textrm{BR}_i^p$: $\Vc^{m-1}   \rightarrow \mathscr{B}\left(\left[0,+\infty\right]\right)$, where $\mathscr{B}$ is the borel set, is the best response correspondence, that is,
 \begin{IEEEeqnarray}{rll}
 	\textrm{BR}_i^{(p)}\left(\vv_{i-1}\right) = \argmin_{v_i \in  \Vc }  \phi^{(p)}_i \left(\vv\right).
 \end{IEEEeqnarray}
 
The following theorems provide the best responses of player $i$ in $\gameNF_p$, with $i \in \Kc$ and $p~\in~\{1,2,3\}$.

\begin{theorem}\label{Theorem_BR_game1}
		For all $i \in \Kc$, the set of best responses of player $i$ to the complementary action profile $\vv_{-i} \in~\Vc^{m-1}$ in~$\gameNF_1$ in~\eqref{gameformula_overall} is
		\begin{IEEEeqnarray}{rll}\label{eq_BR_game1}
			& \textrm{\textnormal{BR}}_i^{(p) } \left(\vv_{i-1}\right) =  \Bigg\{\!l\!\in\![0,+\infty)\!: \\
            \nonumber
			&l\!=\!\frac{1}{2}\!\sqrt{\! \Bigl(\! \frac{\beta_i + \alpha_i \sigma^2  \beta_i - \alpha_i}{\beta_i\alpha_i}\Bigl)^2 \!\! -  \frac{4 \left( \!\beta_i \sigma^2 \!- \! \alpha_i\sigma^2  \! +\!\frac{1}{\lambda}(\alpha_i \sigma^2 -1) \right)}{\beta_i\alpha_i}\! } \\
            \nonumber
            & \ \ \ -  \dfrac{ \beta_i\!+\!\alpha_i \sigma^2  \beta_i\!-\!\alpha_i }{2\beta_i\alpha_i}, \\
            \nonumber
			& \mbox{with} \ \ \alpha_i \eqdef  \ev_i^{\sf T} \Big( \Sigmam_{YY}  + \sum_{j \in {\cal K} \setminus \{i\}}  v_{j} \ev_j \ev_j^{\sf T}  \Big)^{-1} \ev_i, \ \ \mbox{and} \\
            \nonumber
            & \beta_i \eqdef  \ev_i^{\sf T} \Sigmam_{YY}^{-1}\ev_i  \label{20220831_1}\Bigg\}.
		\end{IEEEeqnarray} 
\end{theorem}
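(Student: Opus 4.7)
The plan is to exploit the convexity of $\phi_i^{(1)}$ in $v_i$ established by Proposition~3 (valid for $\lambda \geq 1$). With $\vv_{-i}$ held fixed, the minimization of $\phi_i^{(1)}(\vv)$ over $v_i \in [0,+\infty)$ reduces to solving $\partial \phi_i^{(1)}/\partial v_i = 0$ on the interior, and convexity then guarantees that any interior root is the unique global minimizer on the feasible set.

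The first step is to compute this partial derivative. Define $\Mm_{-i} \eqdef \Sigmam_{Y\!Y} + \sum_{j \in \Kc \setminus \{i\}} v_j \ev_j \ev_j^{\sf T} \in \Sm_{++}^m$, so that $\Sigmam_{Y\!Y} + \sum_{j \in \Kc} v_j \ev_j \ev_j^{\sf T} = \Mm_{-i} + v_i \ev_i \ev_i^{\sf T}$. Jacobi's formula combined with the Sherman--Morrison identity then yields
\begin{equation*}
\frac{\partial}{\partial v_i}\log\left|\Mm_{-i} + v_i \ev_i \ev_i^{\sf T}\right| = \ev_i^{\sf T}\left(\Mm_{-i} + v_i \ev_i \ev_i^{\sf T}\right)^{-1}\ev_i = \frac{\alpha_i}{1 + v_i \alpha_i},
\end{equation*}
with $\alpha_i$ as defined in~\eqref{eq_BR_game1}. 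For the second log-determinant in~\eqref{game1_cost}, the matrix $\sigma^2 \Id_m + \sum_{j \in \Kc \setminus \{i\}} v_j \ev_j \ev_j^{\sf T}$ is diagonal with $\sigma^2$ at its $(i,i)$-entry, so the analogous calculation collapses to $1/(\sigma^2 + v_i)$. Finally, the trace term contributes the constant $\beta_i = \ev_i^{\sf T} \Sigmam_{Y\!Y}^{-1} \ev_i$. Assembling these pieces,
\begin{equation*}
\frac{\partial \phi_i^{(1)}}{\partial v_i} = \frac{1}{2}\left[(1-\lambda)\frac{\alpha_i}{1 + v_i \alpha_i} - \frac{1}{\sigma^2 + v_i} + \lambda \beta_i\right].
\end{equation*}

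The next step is to set this expression to zero and clear denominators by multiplying through by $(\sigma^2 + v_i)(1 + v_i \alpha_i) > 0$. Collecting powers of $v_i$ produces a quadratic whose leading coefficient is $\lambda \beta_i \alpha_i > 0$; after dividing through by $\lambda$, the remaining coefficients become $b = \beta_i + \beta_i \alpha_i \sigma^2 - \alpha_i$ and $c = \beta_i \sigma^2 - \alpha_i \sigma^2 + \tfrac{1}{\lambda}(\alpha_i \sigma^2 - 1)$, which match exactly those appearing in the square root and the trailing additive term of~\eqref{eq_BR_game1}. Applying the standard quadratic formula and factoring $\beta_i \alpha_i$ under the radical then reproduces the announced expression, provided the ``$+$'' branch is taken.

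The main obstacle is to argue that the ``$+$'' branch is the correct one to retain. Since $\phi_i^{(1)}$ is convex, its derivative is monotone non-decreasing; as $v_i \to \infty$ the derivative tends to the positive constant $\lambda \beta_i/2$, so there is at most one non-negative zero. A sign check at the two roots $(-b \pm \sqrt{b^2 - 4ac})/(2\beta_i\alpha_i)$ shows that the ``$-$'' root is always negative in the relevant parameter regime, so the ``$+$'' root is the unique candidate in $[0,+\infty)$ and coincides with the global minimizer whenever the discriminant is non-negative. The remainder of the derivation is routine algebraic bookkeeping: expanding the cleared-denominator identity, grouping powers of $v_i$, normalizing by $\lambda\beta_i\alpha_i$, and matching term-by-term with the expression in~\eqref{eq_BR_game1}.
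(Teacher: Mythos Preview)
Your proposal is correct and follows the same line as the paper's own proof: invoke the convexity of $\phi_i^{(1)}$ in $v_i$ from Proposition~\ref{prop_game1_cov} and characterize the best response as the stationary point. The paper's proof stops there (``the best responses are achieved on the first saddle point''), whereas you actually carry out the derivative computation, clear denominators, and solve the resulting quadratic to recover the explicit formula in~\eqref{eq_BR_game1}; so your argument is strictly more complete than the paper's.
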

\begin{proof}
	From Proposition~\ref{prop_game1_cov}, the cost function for player $i \in \Kc$ in $\Gc_1$ is convex. Therefore, the best responses are achieved on the first saddle point. This completes the proof.
\end{proof}

\begin{theorem}\label{Theorem_BR_game2}
	For all $i \in \Kc$, the set of best responses of player $i$ to the complementary action profile $\vv_{-i} \in~\Vc^{m-1}$ in~$\gameNF_2$ in~\eqref{gameformula_overall} is
\begin{IEEEeqnarray}{rll}\label{eq_BR_game2} 
	 \textrm{\textnormal{BR}}_i^{(p) }\left(\vv_{i-1}\right)  & = \Bigg\{l\in[0,+\infty): \\
     \nonumber
    & \frac{-\ev_i^{\sf T}\Hm \Sigmam_{X\!X} \Hm^{\sf T}\ev_i}{\left(\sigma^2+l\right)\left(  \ev_i^{\sf T} \Sigmam_{Y\!Y}  \ev_i + l\right)} - \lambda   \frac{\alpha_i}{1+l\alpha_i} + \lambda \beta_i = 0, \\
    \nonumber
    & \mbox{with} \ \ \alpha_i  \ \mbox{and} \ \beta_i \ in~\eqref{20220831_1}\Bigg\}.
\end{IEEEeqnarray}
\end{theorem}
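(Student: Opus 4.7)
The plan is to exploit the convexity result established in Proposition~\ref{prop_game1_cov}, which guarantees that for $\lambda \geq 1$ the cost function $\phi^{(2)}_i$ is convex in $v_i$ over the one-dimensional action set $\Vc = [0,+\infty)$. Given any fixed complementary profile $\vv_{-i}\in\Vc^{m-1}$, convexity collapses the minimization defining $\textrm{BR}^{(2)}_i(\vv_{-i})$ to a first-order stationarity condition $\partial \phi^{(2)}_i/\partial v_i = 0$ on the interior, so that the best-response set is characterized by the nonnegative roots of that single scalar equation. The proof therefore reduces to differentiating each of the three summands in \eqref{game2_cost} with respect to $v_i$ and collecting the resulting terms.

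The first (mutual-information) summand, $\tfrac{1}{2}\log\bigl(1 + \ev_i^{\sf T}\Hm\Sigmam_{X\!X}\Hm^{\sf T}\ev_i/(\sigma^2+v_i)\bigr)$, is rewritten as a difference of logarithms and differentiated directly; combined with the identity $\ev_i^{\sf T}\Sigmam_{Y\!Y}\ev_i = \ev_i^{\sf T}\Hm\Sigmam_{X\!X}\Hm^{\sf T}\ev_i + \sigma^2$, which is immediate from \eqref{20220830_4}, its derivative takes the form $-\tfrac{1}{2}\ev_i^{\sf T}\Hm\Sigmam_{X\!X}\Hm^{\sf T}\ev_i/[(\sigma^2+v_i)(\ev_i^{\sf T}\Sigmam_{Y\!Y}\ev_i+v_i)]$, which is exactly the first rational summand in \eqref{eq_BR_game2} up to the overall $\tfrac{1}{2}$. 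The third (trace) summand is linear in each $v_j$, so by linearity of $\textnormal{tr}(\cdot)$ its derivative reduces to $\tfrac{\lambda}{2}\ev_i^{\sf T}\Sigmam_{Y\!Y}^{-1}\ev_i = \tfrac{\lambda}{2}\beta_i$, with $\beta_i$ as defined in \eqref{20220831_1}. The log-determinant summand is the only step requiring care: I would isolate the $v_i$-dependence by writing $\Sigmam_{Y\!Y} + \sum_{j\in\Kc}v_j\ev_j\ev_j^{\sf T} = \Am + v_i\ev_i\ev_i^{\sf T}$, where $\Am \eqdef \Sigmam_{Y\!Y} + \sum_{j\in\Kc\setminus\{i\}}v_j\ev_j\ev_j^{\sf T}$ is independent of $v_i$, and then apply the matrix determinant lemma to obtain $|\Am + v_i\ev_i\ev_i^{\sf T}| = |\Am|(1+v_i\ev_i^{\sf T}\Am^{-1}\ev_i) = |\Am|(1+v_i\alpha_i)$, reproducing exactly the $\alpha_i$ imported from Theorem~\ref{Theorem_BR_game1}. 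Differentiation of this summand then contributes $-\tfrac{\lambda}{2}\alpha_i/(1+v_i\alpha_i)$.

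Summing the three contributions, equating to zero, canceling the common factor $\tfrac{1}{2}$, and relabeling $v_i = l$ produces verbatim the stationarity equation displayed in \eqref{eq_BR_game2}. The main obstacle I anticipate is purely algebraic bookkeeping in the log-determinant step, namely applying the matrix determinant lemma in the form that reproduces $\alpha_i$ with the precise complementary-action parametrization used in Theorem~\ref{Theorem_BR_game1}, rather than some equivalent but differently normalized expression. A secondary caveat concerns the boundary $v_i = 0$: since the total derivative tends to $\lambda\beta_i > 0$ as $v_i \to +\infty$ and the cost is convex, any nonnegative root of the stationarity equation is automatically a global minimizer on $[0,+\infty)$, and the theorem statement is to be understood as characterizing that root whenever one exists (otherwise the minimum degenerates to $v_i = 0$, a case handled by the same first-order analysis applied at the boundary).
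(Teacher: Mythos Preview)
Your proposal is correct and follows the same approach as the paper: invoke Proposition~\ref{prop_game1_cov} for convexity of $\phi_i^{(2)}$ in $v_i$, so that the best response is characterized by the first-order stationarity condition. The paper's own proof stops at this one sentence and does not write out the derivative computation, whereas you carry it through explicitly (your term-by-term differentiation and use of the matrix determinant lemma to recover $\alpha_i$ are all correct), so your argument is in fact more complete than the original.
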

 \begin{proof}
 	From Proposition~\ref{prop_game1_cov}, the cost function for player $i \in \Kc$ in $\Gc_2$ is convex. Therefore, the best responses are achieved on the first saddle point. This completes the proof.
 \end{proof}

\begin{theorem}\label{Theorem_BR_game3}
		For all $i \in \Kc$, the set of best responses of player $i$ to the complementary action profile $\vv_{-i} \in~\Vc^{m-1}$ in~$\gameNF_3$ in~\eqref{gameformula_overall} is
		\begin{IEEEeqnarray}{rll}\label{eq_BR_game3}
		& \textrm{\textnormal{BR}}_i^{(p) }\!\!\left(\vv_{i-1}\right)  = \Bigg\{\!l\!\in\![0,+\infty)\!: \\
        \nonumber
		& \frac{ \lambda l }{\left(\ev_i^{\sf T} \Sigmam_{Y\!Y}  \ev_i+l \right)\ev_i^{\sf T} \Sigmam_{Y\!Y}  \ev_i}- \frac{ \gamma_i}{(\sigma^2+l )\left(\sigma^2+l +\alpha_i\right)} = 0, \\
		\nonumber
		& \mbox{with} \ \ \alpha_i \ \ in~\eqref{20220831_1}, \ \ \mbox{and} \ \ \gamma_i \ \ \textnormal{as} \ \ \\
        \nonumber
		& \ev_i^{\sf T}   \left( \! \Hm\Sigmam_{X\!X}\Hm^{\sf T}\!\!\!\!\!\!\!\!\!\!\sum_{j \in {\cal K} \setminus \{i\}}\!\!\!\dfrac{1}{\sigma^2 + v_j} \ev_j \ev_j^{\sf T}\! \!+\! \!\textnormal{\textbf{I}}_m\!\!\right)^{\!\!\!-1}\!\!\!\!\!\!\!\!\!\!\!\!\Hm \Sigmam_{X\!X} \Hm^{\sf T}\ev_i\Bigg\}.
	\end{IEEEeqnarray}
\end{theorem}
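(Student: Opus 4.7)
The plan is to leverage the convexity of $\phi^{(3)}_i(\vv)$ in $v_i$ established in Proposition~\ref{prop_game1_cov}, reducing the best-response problem to a first-order stationarity condition over $\Vc = [0,+\infty)$, and then to recognize the equation in~\eqref{eq_BR_game3} as precisely this condition. Because $v_i \mapsto \phi^{(3)}_i(\vv)$ is convex on a half-line and continuous at the boundary, the set of global minimizers is fully described by the non-negative roots of $\partial \phi^{(3)}_i/\partial v_i = 0$, with the boundary case $v_i = 0$ absorbed naturally by the set-valued notation $l \in [0,+\infty)$ in~\eqref{eq_BR_game3}.

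First, I would reuse the reformulation of $\phi^{(3)}_i$ already derived in the proof of Proposition~\ref{prop_game1_cov}, namely~\eqref{20220830_6}, which isolates the $v_i$-dependence in the log-determinant factor $\log|(\sigma^2+v_i)^{-1}\Hm\Sigmam_{X\!X}\Hm^{\sf T}\ev_i\ev_i^{\sf T}+\Am|$, with $\Am$ independent of $v_i$. The derivative of this factor with respect to $v_i$ was already computed in~\eqref{20220613_1} and equals $-\gamma_i/[(\sigma^2+v_i)(\sigma^2+v_i+\gamma_i)]$, where the scalar $\gamma_i = \ev_i^{\sf T}\Am^{-1}\Hm\Sigmam_{X\!X}\Hm^{\sf T}\ev_i$ is exactly the one recorded in the theorem statement.

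Second, I would differentiate the remaining scalar term $v_i/(\ev_i^{\sf T}\Sigmam_{Y\!Y}\ev_i) + \log[\ev_i^{\sf T}\Sigmam_{Y\!Y}\ev_i/(\ev_i^{\sf T}\Sigmam_{Y\!Y}\ev_i+v_i)]$, which by elementary scalar calculus equals $v_i/[(\ev_i^{\sf T}\Sigmam_{Y\!Y}\ev_i)(\ev_i^{\sf T}\Sigmam_{Y\!Y}\ev_i+v_i)]$. Combining both pieces with the $1/2$ and $\lambda/2$ prefactors of~\eqref{game3_cost} and setting the sum to zero yields, after clearing the common factor of $1/2$, precisely the stationarity equation reported in~\eqref{eq_BR_game3}. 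Convexity then promotes any root $l \in [0,+\infty)$ to a global minimizer on $\Vc$, so the best-response set coincides with this root set.

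The main obstacle is matrix-calculus bookkeeping: tracking how the log-determinant derivative in~\eqref{20220613_1}, itself justified through Appendix~C by a Sherman--Morrison-type rank-one reduction, produces the scalar $\gamma_i$ in the form written in the theorem, and then reconciling the scalar appearing in the denominator factor $\sigma^2+l+\alpha_i$ in~\eqref{eq_BR_game3} against the scalar $\sigma^2+v_i+\gamma_i$ that emerges from a direct derivative. This amounts to identifying, via the matrix inversion lemma, the two bilinear reductions of $\Am^{-1}\Hm\Sigmam_{X\!X}\Hm^{\sf T}$ and of $(\Sigmam_{Y\!Y}+\sum_{j \neq i} v_j \ev_j\ev_j^{\sf T})^{-1}$ along $\ev_i$; once this identification is in place, the remainder of the argument is essentially a direct quotation of Proposition~\ref{prop_game1_cov} together with elementary one-variable calculus.
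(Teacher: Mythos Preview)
Your approach is exactly the paper's: invoke the convexity from Proposition~\ref{prop_game1_cov} and identify the best response with the stationarity condition $\partial\phi_i^{(3)}/\partial v_i=0$, using the derivative~\eqref{20220613_1} already obtained there. The paper's own proof is a two-line appeal to Proposition~\ref{prop_game1_cov} with no further computation, so your more detailed derivative check is a strict superset of it.

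One correction on your last paragraph: the discrepancy you flagged between $\sigma^2+l+\alpha_i$ in the statement and $\sigma^2+v_i+\gamma_i$ coming out of~\eqref{20220613_1} is \emph{not} something the matrix inversion lemma will reconcile. The scalar $\alpha_i=\ev_i^{\sf T}(\Sigmam_{Y\!Y}+\sum_{j\neq i}v_j\ev_j\ev_j^{\sf T})^{-1}\ev_i$ has the dimension of an inverse variance, whereas $\gamma_i=\ev_i^{\sf T}\Am^{-1}\Hm\Sigmam_{X\!X}\Hm^{\sf T}\ev_i$ has the dimension of a variance, so $\sigma^2+l+\alpha_i$ is not even dimensionally consistent. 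The denominator factor should read $\sigma^2+l+\gamma_i$; this is a typo in the displayed equation~\eqref{eq_BR_game3}, and your computation is the correct one.
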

\begin{proof}
		From Proposition~\ref{prop_game1_cov}, the cost function for player $i \in \Kc$ in $\Gc_3$ is convex. Therefore, the best responses are achieved on the first saddle point. This completes the proof.
\end{proof}

\twocolumn
\subsection{Potential Games}\label{sec_potentialgames}
The following propositions provide the potential functions of the games that arise in this setting. The property of being potential games allows the attackers in decentralized attacks to construct attacks with game theoretic techniques in this security setting.
\begin{prop}\label{prop_game1_potential}
	The game $\gameNF_1$ is a potential game with a potential function $\psi_1$: $\Vc^m \rightarrow \mathds{R}_+$, such that for all $\vv \in \Vc^m$, 
		\begin{IEEEeqnarray}{rll}\label{potential_game1}
		\psi_1(\vv) \eqdef &\frac{1}{2}(1-\lambda)\log \left|\Sigmam_{Y\!Y}  +  \sum_{i \in \Kc} v_i\ev_i \ev_i^{\sf T}\right| \\
		\nonumber
		&-  \frac{1}{2}\log \left|\sigma^2 \textbf{I}_m  + \sum_{i \in \Kc} v_i\ev_i \ev_i^{\sf T}\right| \\
		\nonumber
		&+ \frac{1}{2} \lambda \left( \log \left|\Sigmam_{Y\!Y}\right| + \textnormal{tr}\left(\Sigmam_{Y\!Y}^{-1}\sum_{i \in \Kc} v_i\ev_i \ev_i^{\sf T}\right)\right)\!\!.
	\end{IEEEeqnarray}
\end{prop}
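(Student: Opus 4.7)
The plan is to reduce the claim to the observation that $\gameNF_1$ is an \emph{identical-interest} game, i.e.\ that the cost $\phi_i^{(1)}$ defined in~\eqref{game1_cost} is the same function of $\vv$ for every player $i \in \Kc$. Recall that a normal-form game $(\Kc,\Vc^m,\{\phi_i\})$ is an exact potential game with potential $\psi$ if, for every $i \in \Kc$, every $v_i,\tilde v_i \in \Vc$ and every $\vv_{-i}\in\Vc^{m-1}$,
\begin{equation}
\phi_i(v_i,\vv_{-i})-\phi_i(\tilde v_i,\vv_{-i}) \;=\; \psi(v_i,\vv_{-i})-\psi(\tilde v_i,\vv_{-i}). \nonumber
\end{equation}
If $\phi_i=\psi_1$ for every $i$, then this identity is trivial, so the whole argument collapses to verifying this coincidence.

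First I would inspect the definition of $\phi_i^{(1)}$ in~\eqref{game1_cost}: its two building blocks are $I(X^n;Y_A^m)$ from~\eqref{eq_MI} and $D(P_{Y_A^m}\|P_{Y^m})$ from~\eqref{eq_KL}. Both expressions depend on $\vv$ only through the attack covariance $\Sigmam_{A\!A}$, and by~\eqref{20220519_6} one has $\Sigmam_{A\!A}=\sum_{j\in\Kc}v_j\ev_j\ev_j^{\sf T}$, which is a fully symmetric function of the profile $\vv$ with no distinguished index. Consequently, the index $i$ never enters either term, so $\phi_i^{(1)}(\vv)$ is in fact independent of $i$ and equals the common global objective $I(X^n;Y_A^m)+\lambda D(P_{Y_A^m}\|P_{Y^m})$.

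Next I would substitute $\Sigmam_{A\!A}=\sum_{j\in\Kc}v_j\ev_j\ev_j^{\sf T}$ into~\eqref{eq_MI} and~\eqref{eq_KL}. The mutual-information term yields
$\tfrac{1}{2}\log\left|\Sigmam_{Y\!Y}+\sum_jv_j\ev_j\ev_j^{\sf T}\right|-\tfrac{1}{2}\log\left|\sigma^2\Id_m+\sum_jv_j\ev_j\ev_j^{\sf T}\right|$, and the KL term yields $\tfrac{1}{2}\bigl(\log|\Sigmam_{Y\!Y}|-\log|\Sigmam_{Y\!Y}+\sum_jv_j\ev_j\ev_j^{\sf T}|+\operatorname{tr}(\Sigmam_{Y\!Y}^{-1}\sum_jv_j\ev_j\ev_j^{\sf T})\bigr)$. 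Taking the weighted sum $I+\lambda D$, the two $\log|\Sigmam_{Y\!Y}+\sum_jv_j\ev_j\ev_j^{\sf T}|$ contributions combine into the coefficient $\tfrac{1}{2}(1-\lambda)$, and collecting terms reproduces exactly the right-hand side of~\eqref{potential_game1}. Thus $\phi_i^{(1)}(\vv)=\psi_1(\vv)$ for every $i\in\Kc$.

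Finally, substituting this equality into the potential-game identity displayed above gives $\phi_i^{(1)}(v_i,\vv_{-i})-\phi_i^{(1)}(\tilde v_i,\vv_{-i})=\psi_1(v_i,\vv_{-i})-\psi_1(\tilde v_i,\vv_{-i})$ for all $i$, $v_i$, $\tilde v_i$, $\vv_{-i}$, which is precisely the definition of an exact potential game with potential $\psi_1$. There is no genuine obstacle here: the only thing worth emphasizing is the symmetry-based observation that globalizing both the disruption and the stealth metrics forces the game into identical-interest form. In contrast, the analogous results for $\gameNF_2$ and $\gameNF_3$ will require real work, because there one term of the cost is localized to attacker $i$ and coincidence with a single global potential is no longer automatic.
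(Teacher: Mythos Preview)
Your proposal is correct and follows exactly the paper's approach: the paper's proof is the single observation that all players share the same cost function $\phi_i^{(1)}$ in~\eqref{game1_cost}, which immediately makes $\gameNF_1$ an identical-interest (hence exact potential) game with $\psi_1=\phi_i^{(1)}$. Your additional verification that the expression in~\eqref{potential_game1} matches~\eqref{game1_cost} and your remark contrasting this with the genuinely nontrivial cases $\gameNF_2$, $\gameNF_3$ are accurate and go slightly beyond what the paper spells out.
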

\begin{proof}
The proof follows from the observation that all the players have the same cost function $\phi_i^{(1)}$ in~\eqref{game1_cost}, with $i \in \Kc$~\cite{potentialgame}. This completes the proof.
\end{proof}

\begin{prop}\label{prop_game2_potential}
	The game $\gameNF_2$ is a potential game with a potential function $\psi_2$: $\Vc^m \rightarrow \mathds{R}_+$, such that for all $\vv \in \Vc^m$, 
	\begin{IEEEeqnarray}{rll}\label{potential_game2}
		\psi_2(\vv) \eqdef &\frac{1}{2}\sum_{j \in \Kc} \log \left(1+\frac{ \ev_j^{\sf T}\Hm \Sigmam_{X\!X} \Hm^{\sf T}\ev_j }{\sigma^2 +v_j}\right)  \\
		\nonumber
		&+ \frac{1}{2} \lambda\log \frac{\left| \Sigmam_{Y\!Y}  \right|}{\left| \Sigmam_{Y\!Y} + \sum_{j \in \Kc} v_j \ev_j \ev_j^{\sf T}\right|} \\
		\nonumber
		&+ \frac{1}{2} \lambda  \textnormal{tr}\left(\Sigmam_{Y\!Y}^{-1}\sum_{j \in \Kc} v_j\ev_j \ev_j^{\sf T}\right).
	\end{IEEEeqnarray}
\end{prop}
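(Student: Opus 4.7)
The plan is to exhibit $\psi_2$ as an exact potential, which is the standard notion for potential games: for every player $i\in\Kc$, for every $v_i, v_i'\in\Vc$, and for every $\vv_{-i}\in\Vc^{m-1}$,
\begin{equation*}
\phi^{(2)}_i(v_i,\vv_{-i})-\phi^{(2)}_i(v_i',\vv_{-i}) = \psi_2(v_i,\vv_{-i})-\psi_2(v_i',\vv_{-i}).
\end{equation*}
The key structural observation is that the cost $\phi_i^{(2)}$ in~\eqref{game2_cost} splits into a private term depending only on player $i$'s action plus a common term that is symmetric across players. This is the hallmark of an exact potential game, and is the same pattern that underlies Proposition~\ref{prop_game1_potential}, except that here the private summand is not trivial.

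First, I would write $\phi_i^{(2)}(\vv)=f_i(v_i)+g(\vv)$, where
\begin{equation*}
f_i(v_i)\eqdef \tfrac{1}{2}\log\!\left(1+\frac{\ev_i^{\sf T}\Hm\Sigmam_{X\!X}\Hm^{\sf T}\ev_i}{\sigma^2+v_i}\right)
\end{equation*}
depends only on $v_i$, and
\begin{equation*}
g(\vv)\eqdef \tfrac{1}{2}\lambda\log\frac{|\Sigmam_{Y\!Y}|}{|\Sigmam_{Y\!Y}+\sum_{j\in\Kc}v_j\ev_j\ev_j^{\sf T}|}+\tfrac{1}{2}\lambda\,\textnormal{tr}\!\left(\Sigmam_{Y\!Y}^{-1}\sum_{j\in\Kc}v_j\ev_j\ev_j^{\sf T}\right)
\end{equation*}
is identical for every player $i\in\Kc$, since its dependence on $\vv$ is fully symmetric.

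Next, I would candidate the exact potential as the sum of the private parts plus the common part, namely $\psi_2(\vv)=\sum_{j\in\Kc}f_j(v_j)+g(\vv)$. Matching this against the right-hand side of~\eqref{potential_game2} shows that it coincides with the claimed expression term by term, so only verification remains. To verify, fix $i\in\Kc$ and $\vv_{-i}\in\Vc^{m-1}$ and compute $\psi_2(v_i,\vv_{-i})-\psi_2(v_i',\vv_{-i})$. Since $f_j(v_j)$ for $j\neq i$ does not depend on $v_i$, the sum $\sum_{j\in\Kc}f_j(v_j)$ contributes only $f_i(v_i)-f_i(v_i')$, and the common part contributes $g(v_i,\vv_{-i})-g(v_i',\vv_{-i})$. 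This matches exactly $\phi_i^{(2)}(v_i,\vv_{-i})-\phi_i^{(2)}(v_i',\vv_{-i})$ obtained from the same decomposition, completing the verification.

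The argument is essentially bookkeeping; the only point requiring any care is confirming that $g(\vv)$ is genuinely player-independent. This is immediate because both the log-determinant term and the trace term in $g$ are expressed as functions of $\sum_{j\in\Kc}v_j\ev_j\ev_j^{\sf T}$, with no index $i$ appearing outside that sum. I do not expect a substantive obstacle: once the decomposition $\phi_i^{(2)}=f_i(v_i)+g(\vv)$ is written down, the potential-game identity follows by telescoping. The main thing to get right when writing it up is to quote the exact potential condition, keep the indexing of the private terms straight when summing over $\Kc$, and point out that plugging $\sum_{j\in\Kc}f_j(v_j)+g(\vv)$ back in reproduces~\eqref{potential_game2}.
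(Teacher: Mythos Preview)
Your proposal is correct and follows essentially the same approach as the paper: both arguments verify that the unilateral change in $\phi_i^{(2)}$ equals the unilateral change in $\psi_2$, which is precisely the exact-potential identity. The paper does this by writing out $\phi_i^{(2)}(\vv)-\phi_i^{(2)}(\vv_{-i},x)$ and $\psi_2(\vv)-\psi_2(\vv_{-i},x)$ explicitly and observing they coincide term by term, whereas you reach the same conclusion more structurally via the decomposition $\phi_i^{(2)}=f_i(v_i)+g(\vv)$; the content is identical.
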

\begin{proof}
	We follow the same method as in~\cite{potentialgame}, the proof follows from evaluating the decrease of the cost function $\phi_i^{(2)}$ in~\eqref{game2_cost} and the potential function $\psi_2$ in~\eqref{potential_game2} when deviating the action of player $i\in \Kc$.
	
Note that
\begin{IEEEeqnarray}{rll} 
&\phi_i^{(2)} (\vv ) - \phi_i^{(2)} (\vv_{-i}, x )  \\
    \nonumber
    =&  \log \left(1+\frac{ \ev_i^{\sf T}\Hm \Sigmam_{X\!X} \Hm^{\sf T}\ev_i }{\sigma^2 +v_i}\right) - \log \left(1+\frac{ \ev_i^{\sf T}\Hm \Sigmam_{X\!X} \Hm^{\sf T}\ev_i }{\sigma^2 +x}\right)\\
    \nonumber
    &+   \lambda\log \frac{\left| \Sigmam_{Y\!Y} + x\ev_i\ev_i^{\sf T}+\sum_{j \in \Kc \setminus \{i\}} v_j \ev_j \ev_j^{\sf T}\right|}{\left| \Sigmam_{Y\!Y} + \sum_{j \in \Kc} v_j \ev_j \ev_j^{\sf T}\right|}   \\
    &+  \lambda  (v_i - x)\textnormal{tr}\left(\Sigmam_{Y\!Y}^{-1} \ev_i \ev_i^{\sf T}\right)\label{20220927_1}, 
\end{IEEEeqnarray} 
and that
\begin{IEEEeqnarray}{rll}
    	&	\psi_i^{(2)} (\vv ) - \psi_i^{(2)} (\vv_{-i}, x ) \\
     \nonumber
 =&  \log \left(1+\frac{ \ev_i^{\sf T}\Hm \Sigmam_{X\!X} \Hm^{\sf T}\ev_i }{\sigma^2 +v_i}\right) - \log \left(1+\frac{ \ev_i^{\sf T}\Hm \Sigmam_{X\!X} \Hm^{\sf T}\ev_i }{\sigma^2 +x}\right)\\
 \nonumber
 &+   \lambda\log \frac{\left| \Sigmam_{Y\!Y} + x\ev_i\ev_i^{\sf T}+\sum_{j \in \Kc \setminus \{i\}} v_j \ev_j \ev_j^{\sf T}\right|}{\left| \Sigmam_{Y\!Y} + \sum_{j \in \Kc} v_j \ev_j \ev_j^{\sf T}\right|}   \\
 &+  \lambda  (v_i - x)\textnormal{tr}\left(\Sigmam_{Y\!Y}^{-1} \ev_i \ev_i^{\sf T}\right)\label{20220927_2}.
\end{IEEEeqnarray}  
For every player $i \in \Kc$ and for every $\vv_{-i} \in \Vc^{m-1} $, let us assume $\phi_i^{(2)} (\vv ) - \phi_i^{(2)} (\vv_{-i}, x ) \leq 0$. Hence, 
from the equalities in~\eqref{20220927_1} and~\eqref{20220927_2}, it follows that
\be
\psi_i^{(2)} (\vv ) - \psi_i^{(2)} (\vv_{-i}, x ) \leq 0.
\ee
For every player $i \in \Kc$ and for every $\vv_{-i} \in \Vc^{m-1} $, let us assume $\psi_i^{(2)} (\vv ) - \psi_i^{(2)} (\vv_{-i}, x ) \leq 0$. Hence, from the equalities in~\eqref{20220927_1} and~\eqref{20220927_2}, it follows that
\be
\phi_i^{(2)} (\vv ) - \phi_i^{(2)} (\vv_{-i}, x ) \leq 0.
\ee   
Therefore, it holds that $\forall v_i \in \Vc \ \mbox{and} \ x \in \Vc$,
\begin{IEEEeqnarray}{rll}\label{ineqn1_game2}
 \phi_i^{(2)} (\vv ) - \phi_i^{(2)} (\vv_{-i}, x ) \leq 0 
\end{IEEEeqnarray} 
if and only if
\begin{IEEEeqnarray}{rll}
 \psi_2(\vv)\label{ineqn2_game2} - \psi_2(\vv_{-i}, x) \leq 0.
\end{IEEEeqnarray} 
This completes the proof.
\end{proof}

\begin{prop}\label{prop_game3_potential}
	The game $\gameNF_3$ is a potential game with a potential function $\psi_3$: $\Vc^m \rightarrow \mathds{R}_+$, such that for all $\vv \in \Vc^m$, 
	\begin{IEEEeqnarray}{rll}\label{potential_game3}
	\psi_3(\vv) \eqdef &\frac{1}{2} \log \frac{\left|\Sigmam_{Y\!Y}  + \sum_{j \in \Kc} v_j\ev_j \ev_j^{\sf T}\right|}{\left|\sigma^2\textbf{I}_m +\sum_{j \in \Kc}v_j\ev_j \ev_j^{\sf T}\right|} \\
	\nonumber
	&+ \frac{1}{2} \lambda \sum_{j \in \Kc}\left( \frac{v_j}{\ev_j^{\sf T}\Sigmam_{Y\!Y} \ev_j} + \log\frac{\ev_j^{\sf T}\Sigmam_{Y\!Y} \ev_j}{\ev_j^{\sf T}\Sigmam_{Y\!Y} \ev_j+v_j} \right).
\end{IEEEeqnarray}
\end{prop}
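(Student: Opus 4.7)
My plan is to establish that $\gameNF_3$ is an exact potential game by directly verifying the defining finite-difference identity for $\psi_3$, following the same template used in the proof of Proposition~\ref{prop_game2_potential}. Concretely, for every player $i \in \Kc$, every complementary profile $\vv_{-i} \in \Vc^{m-1}$, and every pair of actions $v_i, x \in \Vc$, I will show that
\begin{equation}
\phi_i^{(3)}(\vv) - \phi_i^{(3)}(\vv_{-i}, x) = \psi_3(\vv) - \psi_3(\vv_{-i}, x),
\end{equation}
from which the biconditional used in~\cite{potentialgame} to characterize potential games follows at once.

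The verification splits into two parts corresponding to the two summands of both $\phi_i^{(3)}$ in~\eqref{game3_cost} and $\psi_3$ in~\eqref{potential_game3}. First, the mutual-information summand $\frac{1}{2}\log \frac{|\Sigmam_{YY} + \sum_{j \in \Kc} v_j \ev_j\ev_j^{\sf T}|}{|\sigma^2 \Id_m + \sum_{j \in \Kc} v_j \ev_j\ev_j^{\sf T}|}$ appears identically in $\phi_i^{(3)}$ and in $\psi_3$; therefore the finite difference of this term with respect to the $i$-th action coincides in the two quantities and contributes the same amount to each side of the identity.

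Second, the KL-divergence summand of $\phi_i^{(3)}$ is the single-index expression $\frac{1}{2}\lambda \bigl(\frac{v_i}{\ev_i^{\sf T}\Sigmam_{YY}\ev_i} + \log\frac{\ev_i^{\sf T}\Sigmam_{YY}\ev_i}{\ev_i^{\sf T}\Sigmam_{YY}\ev_i + v_i}\bigr)$, while in $\psi_3$ it is the full sum over $j \in \Kc$ of the analogous expressions indexed by $j$. When player $i$ unilaterally deviates from $v_i$ to $x$ with $\vv_{-i}$ held fixed, every summand with $j \in \Kc \setminus \{i\}$ in $\psi_3$ is unchanged, and only the $j=i$ summand is modified. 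That modification is precisely the change produced in the KL term of $\phi_i^{(3)}$, because as functions of the scalar $v_i$ the two expressions are identical. Adding the two contributions yields the required equality of finite differences.

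The argument involves essentially no computation; the only care needed is to keep the fixed indices $j \in \Kc \setminus \{i\}$ separated from the varying index $j = i$ in the summations, so that the $\vv_{-i}$-dependent contributions in $\psi_3$ manifestly cancel on both sides. Once the identity above is established, the exact-potential condition implies, in particular, that for all $v_i, x \in \Vc$,
\begin{equation}
\phi_i^{(3)}(\vv) - \phi_i^{(3)}(\vv_{-i},x) \leq 0 \iff \psi_3(\vv) - \psi_3(\vv_{-i},x) \leq 0,
\end{equation}
which is exactly the statement that $\psi_3$ is a potential function of $\gameNF_3$, completing the proof.
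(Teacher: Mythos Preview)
Your proposal is correct and follows essentially the same approach as the paper: both establish the finite-difference identity $\phi_i^{(3)}(\vv)-\phi_i^{(3)}(\vv_{-i},x)=\psi_3(\vv)-\psi_3(\vv_{-i},x)$ and then invoke the potential-game characterization from~\cite{potentialgame}. The only cosmetic difference is that the paper writes out the two differences explicitly (simplifying the determinant ratio via the matrix determinant lemma) before observing their equality, whereas you argue the equality structurally by noting that the mutual-information term is common to $\phi_i^{(3)}$ and $\psi_3$ and that the $j\neq i$ summands in the KL part of $\psi_3$ cancel under a unilateral deviation.
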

\begin{proof}
	We follow the same method as in~\cite{potentialgame}, the proof follows from evaluating the decrease of the cost function $\phi_i^{(3)}$ in~\eqref{game3_cost} and the potential function $\psi_3$ in~\eqref{potential_game3} when deviating the action of player $i\in \Kc$.
	
	Note that
	\begin{IEEEeqnarray}{rll} 
		  &\phi_i^{(3)} (\vv ) - \phi_i^{(3)} (\vv_{-i}, x )  \\
	=& \frac{1}{2} \log \frac{ 1 +v_i\textnormal{tr} \left(\Sigmam_{YY}^{-1} \ev_i \ev_i^{\sf T} \right)}{1 +x\textnormal{tr} \left(\Sigmam_{YY}^{-1} \ev_i \ev_i^{\sf T} \right)} - \frac{1}{2} \log \frac{  \sigma^2  + v_i }{ \sigma^2 + x }  \\
	\nonumber
	& + \frac{1}{2} \lambda \left( \frac{v_i -x}{\ev_i^{\sf T}\Sigmam_{Y\!Y} \ev_i} + \log\frac{\ev_i^{\sf T}\Sigmam_{Y\!Y} \ev_i+x}{\ev_i^{\sf T}\Sigmam_{Y\!Y} \ev_i+v_i} \right)\label{20220927_3},
	\end{IEEEeqnarray}	
and that	
\begin{IEEEeqnarray}{rll} 
&\psi_i^{(3)} (\vv ) - \psi_i^{(3)} (\vv_{-i}, x ) \\
=&\frac{1}{2} \log \frac{  1 + v_i \textnormal{tr} \left(\Sigmam_{Y\!Y}^{-1} \ev_i \ev_i^{\sf T}\right)}{1 + x \textnormal{tr} \left(\Sigmam_{Y\!Y}^{-1} \ev_i \ev_i^{\sf T}\right)}  - \frac{1}{2} \log \frac{ \sigma^2  + v_i  }{ \sigma^2 + x  }\\
\nonumber
&+ \frac{1}{2} \lambda \left( \frac{v_i -x}{\ev_i^{\sf T}\Sigmam_{Y\!Y} \ev_i} + \log\frac{\ev_i^{\sf T}\Sigmam_{Y\!Y} \ev_i + x}{\ev_i^{\sf T}\Sigmam_{Y\!Y} \ev_i+v_i} \right). \label{20220927_4}
\end{IEEEeqnarray}		
For every player $i$ and for every $\vv_{-i} \in \Vc^{m-1} $, let us assume $\phi_i^{(3)} (\vv ) - \phi_i^{(3)} (\vv_{-i}, x ) \leq 0$. Hence, from the equalities in~\eqref{20220927_3} and~\eqref{20220927_4}, it follows that
\begin{IEEEeqnarray}{rll} 
	&\psi_i^{(3)} (\vv ) - \psi_i^{(3)} (\vv_{-i}, x )  \leq 0.
\end{IEEEeqnarray}
For every player $i$ and for every $\vv_{-i} \in \Vc^{m-1} $, let us assume $\psi_i^{(3)} (\vv ) - \psi_i^{(3)} (\vv_{-i}, x ) \leq 0$. Hence, from the equalities in~\eqref{20220927_3} and~\eqref{20220927_4}, it follows that
\be
\phi_i^{(3)} (\vv ) - \phi_i^{(3)} (\vv_{-i}, x ) \leq 0.
\ee   
Therefore, it holds that $\forall v_i \in \Vc \ \mbox{and} \ x \in \Vc$,
\begin{IEEEeqnarray}{rll} 
	\phi_i^{(3)} (\vv ) - \phi_i^{(3)} (\vv_{-i}, x ) \leq 0 
\end{IEEEeqnarray} 
if and only if
\begin{IEEEeqnarray}{rll}
	\psi_3(\vv)  - \psi_3(\vv_{-i}, x) \leq 0.
\end{IEEEeqnarray} 
This completes the proof.
\end{proof}

\section{Existence and Achievability of Nash Equilibrium}\label{sec_NE}
\subsection{Existence of an Nash Equilibrium}
From the propositions~\ref{prop_game1_potential},~\ref{prop_game2_potential} and~\ref{prop_game3_potential} in Section~\ref{sec_potentialgames}, the games in~\eqref{gameformula_overall} are potential games where the interaction between the players yields the best responses described in Section~\ref{sec_bestresponse}. A game solution that is particularly relevant for this analysis is the Nash Equilibrium (NE). 
\begin{definition}
	The action profile by all the attackers $\vv = \left( v_1,v_2,...,v_m\right) \in \Vc^m$ is Nash Equilibrium (NE) of the game ${\cal G}_p$, with $p \in \{1,2,3\}$, if and only if it is a solution of the fix point equation
	\be
	\sum_{i \in \Kc}  v_i \ev_i \ev_i^{\sf T} = \textnormal{BR}^{(p)}\left(\vv\right),
	\ee
	with $\textnormal{BR}^{(p)}$: $\Vc^m \rightarrow \Sc_{+}^{m}$ being the global best response correspondence, that is,
	\begin{IEEEeqnarray}{rll}
		 \textnormal{BR}^{(p)}\left(\vv\right) = &\sum_{i \in \Kc}\textnormal{BR}_i^{(p)}\left(\vv \right)\ev_i \ev_i^{\sf T}. 
    \end{IEEEeqnarray} 
\end{definition}

Essentially, at an NE, attacker $i \in \Kc$ achieves the minimal cost given the actions adopted by all the other attackers.  
The following proposition highlights an important property of the game ${\cal G}_p$, with $p \in \{1,2,3\}$.
\begin{prop}
	For all $ p \in \{1,2,3\}$, the game $\gameNF_p$ possesses only one NE.
\end{prop}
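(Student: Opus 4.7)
The plan is to exploit the potential game structure established in Propositions~\ref{prop_game1_potential}, \ref{prop_game2_potential}, and~\ref{prop_game3_potential}. For every $p \in \{1,2,3\}$, the game $\gameNF_p$ admits the potential function $\psi_p$, so an action profile $\vv^{\star} \in \Vc^m$ is an NE if and only if, for every $i \in \Kc$, $v_i^{\star}$ minimizes $v_i \mapsto \psi_p(v_i, \vv_{-i}^{\star})$ over $\Vc = [0,+\infty)$. Since $\Vc^m$ is convex, a strictly convex $\psi_p$ would turn this coordinate-wise first-order condition into the KKT condition for the unconstrained-in-interior minimization of $\psi_p$ on $\Vc^m$, so uniqueness of NE reduces to uniqueness of the global minimizer of $\psi_p$.

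First I would verify existence of the minimizer: $\psi_p$ is continuous and coercive on $\Vc^m$, since the linear term $\lambda\,\mathrm{tr}(\Sigmam_{Y\!Y}^{-1}\sum_i v_i \ev_i\ev_i^{\sf T})$ in $\psi_1$ and $\psi_2$, and the linear term $\sum_j v_j/(\ev_j^{\sf T}\Sigmam_{Y\!Y}\ev_j)$ in $\psi_3$, dominates the logarithmic contributions as $\|\vv\| \to \infty$. Hence a global minimizer of $\psi_p$ on $\Vc^m$ exists. It then suffices to establish strict joint convexity of $\psi_p$ on $\Vc^m$.

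The key observation that unlocks strict convexity is that $\sigma^2 \Id_m + \sum_{j \in \Kc} v_j \ev_j \ev_j^{\sf T}$ is diagonal, so the term $-\log|\sigma^2\Id_m + \sum_j v_j \ev_j\ev_j^{\sf T}|$ appearing in both $\psi_1$ and $\psi_3$ decouples as $-\sum_j \log(\sigma^2 + v_j)$, a sum of strictly convex univariate functions, hence is strictly jointly convex in $\vv$. The remaining ingredients are jointly convex: $(1-\lambda)\log|\Sigmam_{Y\!Y} + \sum_i v_i \ev_i\ev_i^{\sf T}|$ is convex for $\lambda \geq 1$ because $\log|\cdot|$ is concave on the positive-definite cone composed with an affine pencil, and the trace and $v_i/\ev_i^{\sf T}\Sigmam_{Y\!Y}\ev_i$ terms are linear. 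For $\psi_2$ the strict convexity is supplied instead by the sum $\sum_{j \in \Kc}\log\bigl(1+c_j/(\sigma^2+v_j)\bigr)$ with $c_j \eqdef \ev_j^{\sf T}\Hm\Sigmam_{X\!X}\Hm^{\sf T}\ev_j > 0$, each summand being strictly convex in $v_j$ (the second derivative was already computed to be strictly positive in the proof of Proposition~\ref{prop_game1_cov}); adding jointly convex log-det and linear trace terms preserves strict joint convexity.

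The main obstacle is precisely this upgrade from the per-coordinate convexity proved in Proposition~\ref{prop_game1_cov} to strict joint convexity on $\Vc^m$; the rank-one structure of the $\ev_j\ev_j^{\sf T}$ and the diagonal decoupling of one of the log-det terms is what makes the upgrade clean and avoids a general spectral argument about Hessians of log-determinants. Once strict convexity and coercivity on the convex domain $\Vc^m$ are in place, $\psi_p$ admits a unique minimizer, and by the potential-game equivalence this unique minimizer is the unique NE of $\gameNF_p$ for each $p \in \{1,2,3\}$, which completes the proof.
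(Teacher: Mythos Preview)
Your strategy of establishing strict \emph{joint} convexity and coercivity of $\psi_p$ on $\Vc^m$, and then identifying the unique minimizer with the unique NE, is sharper than the paper's own argument, which merely invokes the per-coordinate convexity of Proposition~\ref{prop_game1_cov} and appeals to \cite[Lemma~4.3]{potentialgame}. Coordinatewise convexity alone does not force a unique minimum, so your route actually supplies the missing step. The arguments for $\psi_1$ and $\psi_2$ go through as written.

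There is, however, a genuine gap in your treatment of $\psi_3$. After extracting the strictly convex separable piece $-\tfrac12\sum_j\log(\sigma^2+v_j)$ you claim that ``the remaining ingredients are jointly convex,'' but the only log-determinant you account for is the $(1-\lambda)$-weighted one appearing in $\psi_1$. In $\psi_3$ the companion term is $+\tfrac12\log\bigl|\Sigmam_{Y\!Y}+\sum_j v_j\ev_j\ev_j^{\sf T}\bigr|$, with coefficient $+\tfrac12$ rather than $(1-\lambda)/2$; since $\log|\cdot|$ is concave on positive definite matrices, this term is jointly \emph{concave} in $\vv$, and the term-by-term decomposition collapses. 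The repair is to stop splitting and show that the full mutual information $I(X^n;Y_A^m)=\tfrac12\log|\Sm+\Nm|-\tfrac12\log|\Nm|$, with $\Nm\eqdef\sigma^2\Id_m+\sum_j v_j\ev_j\ev_j^{\sf T}$ and $\Sm\eqdef\Hm\Sigmam_{X\!X}\Hm^{\sf T}$, is jointly convex in $\vv$: its Hessian in $\vv$ equals the entrywise (Hadamard) product of $\Nm^{-1}-(\Sm+\Nm)^{-1}\succeq 0$ and $\Nm^{-1}+(\Sm+\Nm)^{-1}\succ 0$, hence is positive semidefinite by the Schur product theorem. The separable, strictly convex local-KL sum in $\psi_3$ then upgrades this to strict joint convexity, after which your coercivity-plus-uniqueness argument closes exactly as you outlined.
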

\begin{proof}
	Note that for all $i \in \Kc$, the potential functions in~\eqref{potential_game1},~\eqref{potential_game2}, and~\eqref{potential_game3} are continuous over the set of all possible actions $v_i \in \Vc $. The set of actions $ \Vc $ is a convex set. From Proposition~\ref{prop_game1_cov}, the cost function for attacker $i$ is convex in each game. Hence, there is only one minimum for each potential function. From~\cite[Lemma 4.3]{potentialgame}, such a minimum corresponds to an NE. This completes the proof.
\end{proof}
The uniqueness of NE in a game guarantees the convergence of the actions by the attackers in the game.
 \subsection{Achievability of the NE}
The attackers are said to play a sequential best response dynamic (BRD) if the attackers can sequentially decide their own action, i.e., variance from their sets of best responses following a round-robin order. Let us denote the choice of attacker $i$ during round $t \in \mathbb{N}$ as $v_{i,t}^* $ and assume that attackers are able to observe all the other attackers' decisions. Under this assumption, the BRD is defined as follows.
\begin{definition}
	(Best Response Dynamics). The players of the game ${\cal G}_p$ are said to play a best response dynamics if there exists a round-robin order in which at each round $t \in \mathbb{N}$, the following holds
    \begin{IEEEeqnarray}{rll}
		 & v_{i,t}^*  \\
		 \nonumber
		 =  & \textnormal{BR}_i & \left(v_{1,t}^*,v_{1,t}^*,...,v_{i-1,t}^*,v_{i+1,t-1}^*,\ldots, v_{m-1,t-1}^*,v_{m,t-1}^*\right). 
    \end{IEEEeqnarray} 
\end{definition}
From the properties of potential games in~\cite[Lemma 4.2]{potentialgame}, the following lemma follows.
\begin{lemma}
	(Achievability of NE attacks). Any BRD in the game ${\cal G}_p$, with $p \in \{1,2,3\}$, converges to an attack construction that is the only NE in the game.
\end{lemma}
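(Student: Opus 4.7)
The plan is to leverage the potential game structure established in Propositions~\ref{prop_game1_potential}--\ref{prop_game3_potential} together with the convexity established in Proposition~\ref{prop_game1_cov} and the uniqueness of the NE already argued in the preceding proposition. First, I would recall the defining property of a potential game: for every $i\in\Kc$, every $\vv_{-i}\in\Vc^{m-1}$, and every pair $v_i,x\in\Vc$,
\begin{equation}
\phi_i^{(p)}(\vv_{-i},v_i)-\phi_i^{(p)}(\vv_{-i},x)\leq 0 \iff \psi_p(\vv_{-i},v_i)-\psi_p(\vv_{-i},x)\leq 0,
\end{equation}
which is exactly what Propositions~\ref{prop_game1_potential}--\ref{prop_game3_potential} give. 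Consequently, any unilateral best response by player $i$ also minimizes the potential $\psi_p$ along the $i$-th coordinate with the remaining coordinates fixed.

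Next, I would analyze the BRD iteration coordinate by coordinate. At round $t$, when player $i$ replaces $v_{i,t-1}^*$ by $v_{i,t}^*\in\textrm{BR}_i^{(p)}(\vv_{-i,t})$, the potential-game equivalence forces
\begin{equation}
\psi_p\bigl(\vv_{-i,t},v_{i,t}^*\bigr)\;\leq\;\psi_p\bigl(\vv_{-i,t},v_{i,t-1}^*\bigr).
\end{equation}
Iterating over $i$ and $t$, the sequence $\{\psi_p(\vv_t^*)\}_{t\in\mathbb{N}}$ is non-increasing. Since $\Vc^m=[0,+\infty)^m$ is convex and, by Proposition~\ref{prop_game1_cov}, $\psi_p$ is convex and bounded below on $\Vc^m$ (because each $\phi_i^{(p)}$ and hence $\psi_p$ grows to finite limits or $+\infty$ at the boundary of $\Vc^m$), the monotone sequence $\{\psi_p(\vv_t^*)\}$ converges. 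Convexity together with the uniqueness of the coordinate-wise minimizer of $\psi_p$ then yields convergence of the iterates $\vv_t^*$ themselves, to a fixed point of the BRD.

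Finally, I would identify the limit: any fixed point of the BRD satisfies $v_i^\ast\in\textrm{BR}_i^{(p)}(\vv_{-i}^\ast)$ for every $i\in\Kc$ simultaneously, which is precisely the NE condition. Since the preceding proposition guarantees that $\gameNF_p$ has exactly one NE, the BRD must converge to that unique NE. This is the content of \cite[Lemma 4.2]{potentialgame}, which the statement explicitly invokes; in our setting the only additional checks are the convexity of $\psi_p$ in each coordinate (already done in Proposition~\ref{prop_game1_cov}) and the non-emptiness of the best-response sets (explicit from Theorems~\ref{Theorem_BR_game1}--\ref{Theorem_BR_game3}). The main obstacle I anticipate is the unboundedness of $\Vc=[0,+\infty)$: I would address this by observing that the KL-divergence term in each $\psi_p$ grows without bound as any $v_i\to\infty$ (for $\lambda\geq 1$), so the BRD iterates remain in a compact sublevel set of $\psi_p$, restoring standard compactness arguments.
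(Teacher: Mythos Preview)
Your proposal is correct and follows the same route as the paper: the paper simply invokes \cite[Lemma~4.2]{potentialgame} without further argument, and what you have written is precisely the standard unpacking of that lemma (monotone decrease of the potential along the BRD, convergence of iterates via convexity and the compactness/coercivity remark, and identification of the limit with the unique NE). Your additional care about unboundedness of $\Vc$ and non-emptiness of best responses is welcome detail that the paper leaves implicit in the citation.
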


The proposed BRD corresponding to each ${\cal G}_p$, with $p \in \{1,2,3\}$, are described in
Algorithm~\ref{alg_game1}, Algorithm~\ref{alg_game2} and Algorithm~\ref{alg_game3}, respectively. 
\begin{algorithm}[H]
	\caption{Best Response Dynamics for $\gameNF_1$}\label{alg_game1} 
	\begin{algorithmic}
		\Require $\Hm$ in \eqref{eq:obs_noattack};\newline
		$\sigma^2$ in \eqref{EqZ};\newline
		$\Sigmam_{X\!X}$ in \eqref{Sigma_XX};\newline
		$\lambda \geq 1$ and initial actions $\vv_0 = \textbf{0}$.
		\Ensure actions at the NE $\vv^* \in \mathds{R}^m$.
		\For {$0<t<t_\textnormal{max}$},
		\For {$i \in \Kc$},
		\State Compute $v_{i,t}^*$ in~\eqref{eq_BR_game1} 
		\EndFor
		\State $t = t+1 $
		\EndFor
		\State Set $\vv^* = \left(v_1^*(t_\textnormal{max}),v_2^*(t_\textnormal{max}),\ldots,v_m^*(t_\textnormal{max})\right)$
	\end{algorithmic}
\end{algorithm}
 
\begin{algorithm}[H]
	\caption{Best Response Dynamics for $\gameNF_2$}\label{alg_game2} 
	\begin{algorithmic}
			\Require $\Hm$ in \eqref{eq:obs_noattack};\newline
	$\sigma^2$ in \eqref{EqZ};\newline
	$\Sigmam_{X\!X}$ in \eqref{Sigma_XX};\newline
	$\lambda \geq 0$ and 
	initial actions $\vv_0 = \textbf{0}$.
		\Ensure actions in the NE $\vv^* \in \mathds{R}^m$.
		\For {$0<t<t_\textnormal{max}$},
		\For {$i \in \Kc$},
		\State Compute $v_{i,t}^*$ in~\eqref{eq_BR_game2} 
		\EndFor
		\State $t = t+1 $
		\EndFor
		\State Set $\vv^* = \left(v_1^*(t_\textnormal{max}),v_2^*(t_\textnormal{max}),\ldots,v_m^*(t_\textnormal{max})\right)$
	\end{algorithmic}
\end{algorithm}
 
\begin{algorithm}[H]
	\caption{Best Response Dynamics for $\gameNF_3$}\label{alg_game3} 
	\begin{algorithmic}
		\Require $\Hm$ in \eqref{eq:obs_noattack};\newline
		$\sigma^2$ in \eqref{EqZ};\newline
		$\Sigmam_{X\!X}$ in \eqref{Sigma_XX};\newline
		$\lambda \geq 0$ and
		initial actions $\vv_0 = \textbf{0}$.
		\Ensure actions in the NE $\vv^*  \in \mathds{R}^m$.
		\For {$0<t<t_\textnormal{max}$},
		\For {$i \in \Kc$},
		\State Compute $v_{i,t}^*$ in~\eqref{eq_BR_game3} 
		\EndFor
		\State $t = t+1 $
		\EndFor
		\State Set $\vv^* = \left(v_1^*(t_\textnormal{max}),v_2^*(t_\textnormal{max}),\ldots,v_m^*(t_\textnormal{max})\right)$
	\end{algorithmic}
\end{algorithm}

The Jacobian matrix $\Hm \in \mathds{R}^{m\times n}$, system noise $\sigma^2 \in \mathds{R}_+$, second order moment of state variables $\Sigmam_{X\!X} \in \Sc_+^n$ and the weighting parameter $\lambda \geq 0$ are required in the proposed Algorithm~\ref{alg_game1}, Algorithm~\ref{alg_game2} and Algorithm~\ref{alg_game3}. We initialize the actions at $t = 0$ as $\vv_0 = \textbf{0}$. Within limited times, that is, while $t < t_{\max}$, player $i \in \Kc$ makes best response with respect to the actions by the other player $i \notin \Kc$ accordingly. The best response $v_i$ for $i \in \Kc$ is determined by the aim of the game, that is,~\eqref{eq_BR_game1},~\eqref{eq_BR_game2} and~\eqref{eq_BR_game3} for $\gameNF_1$, $\gameNF_2$ and $\gameNF_3$, respectively. In $\gameNF_1$, best response dynamic converges to the NE where the mutual information $I(X^n\|Y_A^m)$ and $D(P_{Y_A^m}\|P_{Y^m})$ are minimized. In $\gameNF_2$, best response dynamic converges to the NE where for all player $i \in \Kc$ the mutual information $I(X^n\|Y_{A_i})$ and $D(P_{Y_A^m}\|P_{Y^m})$ are minimized. In $\gameNF_3$, best response dynamic converges to the NE where for all player $i \in \Kc$ the mutual information $I(X^n\|Y_A^m)$ and $D(P_{Y_{A_i}}\|P_{Y_i})$ are minimized.

\section{Numerical Results}\label{sec_numerical_results}
In this section, the properties of the games $\Gc_p$, with $ p \in \{1,2,3\}$, described in Section~\ref{sec_game_formulation} are numerically evaluated on a direct current (DC) state estimation setting for IEEE test systems~\cite{UoW_ITC_99}. The voltage magnitude of the test systems are set to 1.0 per unit, that is the measurements in the systems are the power flow measurements between physically connected buses and power injection measurements that inject to the every buses. The Jacobian matrix $\Hm$ in~\eqref{eq:obs_noattack} is generated by MATPOWER~\cite{matpower}. An exponentially decaying Toeplitz matrix $\Sigmam_{X\!X} \in \Sc_+$ is adopted to capture the statistical dependence between the state variables where the strength of the correlation is set by a parameter $\rho$, that is, $(\Sigmam_{X\!X})_{ij} = \rho^{|i-j|}$ with $(i,j)\in \{1,2,\ldots,n\} \times \{1,2,\ldots,n\}$. Hence, apart from the correlation parameter $\rho$, the noise variance $\sigma^2$ in~\eqref{EqZ}, the Jacobian matrix $\Hm$ and the best response by the attackers, the performance of the game depends on the noise regime. In this setting, the noise regime of the observation model in terms of the signal to noise (SNR) is
\be
\textnormal{SNR} \eqdef 10\log_{10}\left(\frac{\textrm{tr}(\textbf{H}\Sigmam_{\textrm{XX}}\textbf{H}^\textrm{\sf T})}{m\sigma^{2}}\right).
\ee 
In the simulation, note that the game starts with~$\vv = \textbf{0}$, that is, all the attackers have not attack the system in $t = 0$.
\subsection{Performance in terms of the potentials}
Fig.~\ref{game1_bus9_potential} depicts the potential function of $\Gc_1$ given by~\eqref{game1_cost} in terms of round robin for different $\lambda$ when $\rho = 0.9$, SNR = 30 dB in the IEEE 9-bus test system. The NE equilibria has been numerically evaluated and is presented by a red square. In $\Gc_1$, the potential function is the same as the cost function of attacker $i, i \in \{1,2,\ldots,m\}$. From $t = 0$ to $t = 1$, the potential function decreases monotonically, which implies that all attackers in the system benefit from the attacks launched by the other attackers in the round robin. Note that after all the attacker have injected an attack, that is, $t = 1$, the potential keeps on decreasing until the NE equilibria is achieved. Fig.~\ref{game2_bus9_potential} and Fig.~\ref{game3_bus9_potential} depict the potential function of $\Gc_2$ given by~\eqref{game2_cost} and $\Gc_3$ given by~\eqref{game3_cost}, respectively, with the same setting as in Fig.~\ref{game1_bus9_potential}. Surprisingly, in $\Gc_2$, the potential increases from $t = 0$ to $t = 1$ as during the first round robin, the attackers compromise the measurements for the first time to decrease the local information defined in~\eqref{eq_local_I}. However, the attacks in the first round robin significantly increase the KL divergence that results in increasing potential. From $t > 1$, the attackers modify their own action to optimize the cost of the attacks defined in~\eqref{game2_cost} and reach to the NE equilibria that is represented by red square.
\begin{figure}[htbp]
	\centering
	\begin{minipage}[t]{0.49\textwidth}
		\includegraphics[width=8.5cm]{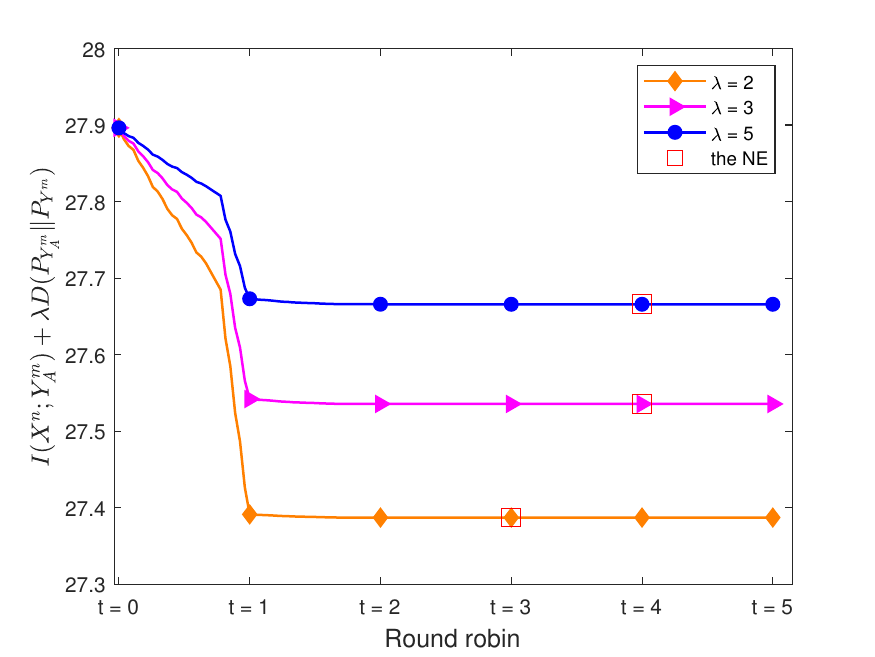}
		\caption{The convergence in {$\Gc_1$} in terms of the potential function $P_1$ on IEEE 9-bus test system with $\rho=0.9$, SNR = 30 dB. The red squares show the potential in the NEs.}\label{game1_bus9_potential} 
	\end{minipage}
	\centering
	\begin{minipage}[t]{0.49\textwidth}
		\includegraphics[width=8.5cm]{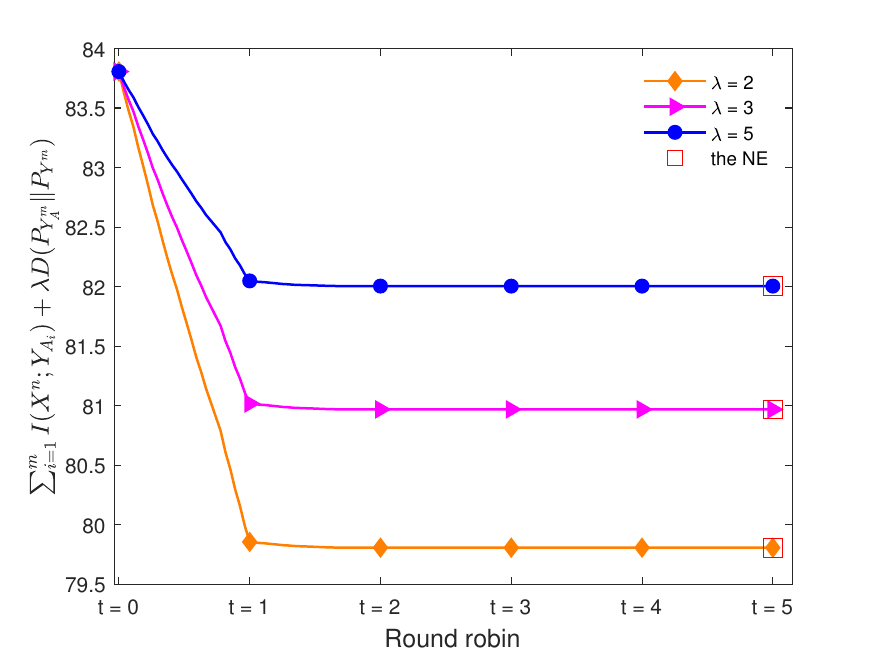}
		\caption{The convergence in {$\Gc_2$} in terms of the potential function $P_2$ on IEEE 9-bus test system with $\rho=0.9$, SNR = 30 dB. The red squares show the potential in the NEs.}\label{game2_bus9_potential} 
	\end{minipage}
	\centering
	\begin{minipage}[t]{0.49\textwidth}
		\includegraphics[width=8.5cm]{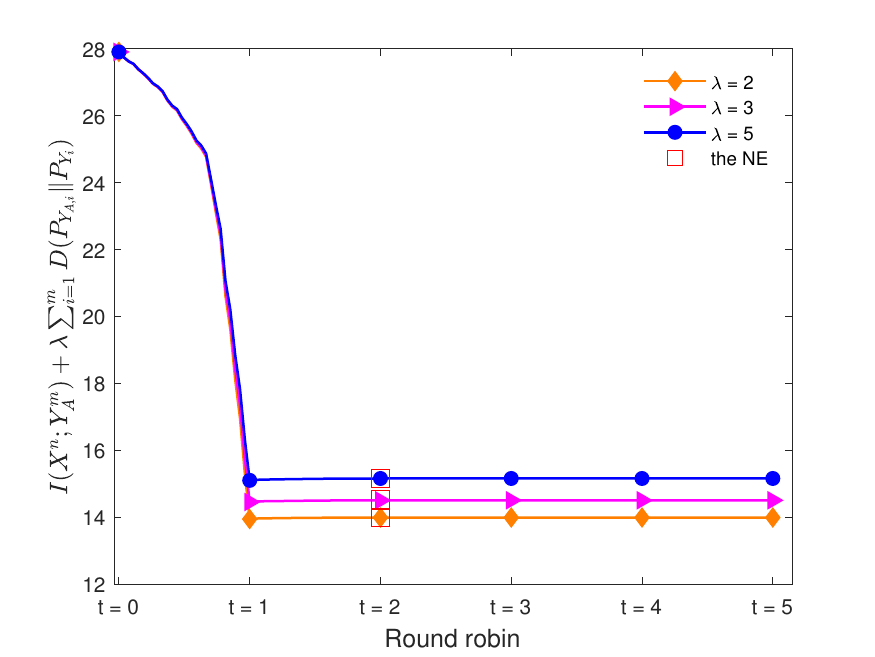}
		\caption{The convergence in {$\Gc_3$} in terms of the potential function $P_3$ on IEEE 9-bus test system with $\rho=0.9$, SNR = 30 dB. The red squares show the potential in the NEs.}\label{game3_bus9_potential} 
	\end{minipage}
\end{figure}
 
\begin{figure}[htbp]
 	\centering
\begin{minipage}[t]{0.49\textwidth}
	\includegraphics[width=8.5cm]{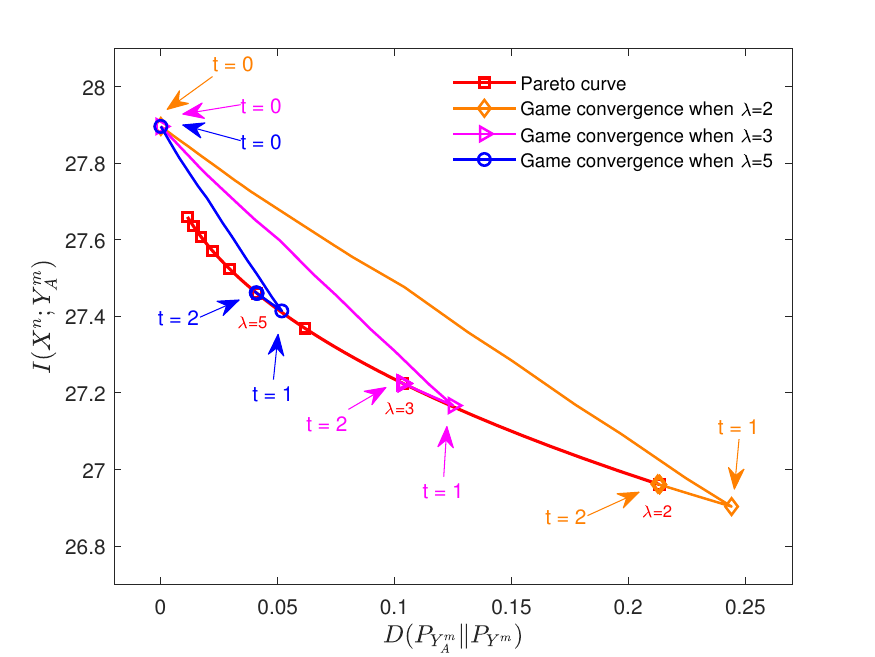}
	\caption{The convergence in {$\Gc_1$} in terms of tradeoff between mutual information and KL divergence on IEEE 9-bus test system with $\rho=0.9$, SNR = 30 dB.}\label{game1_bus9_tradeoff} 
\end{minipage}
\centering
\begin{minipage}[t]{0.49\textwidth}
	\includegraphics[width=8.5cm]{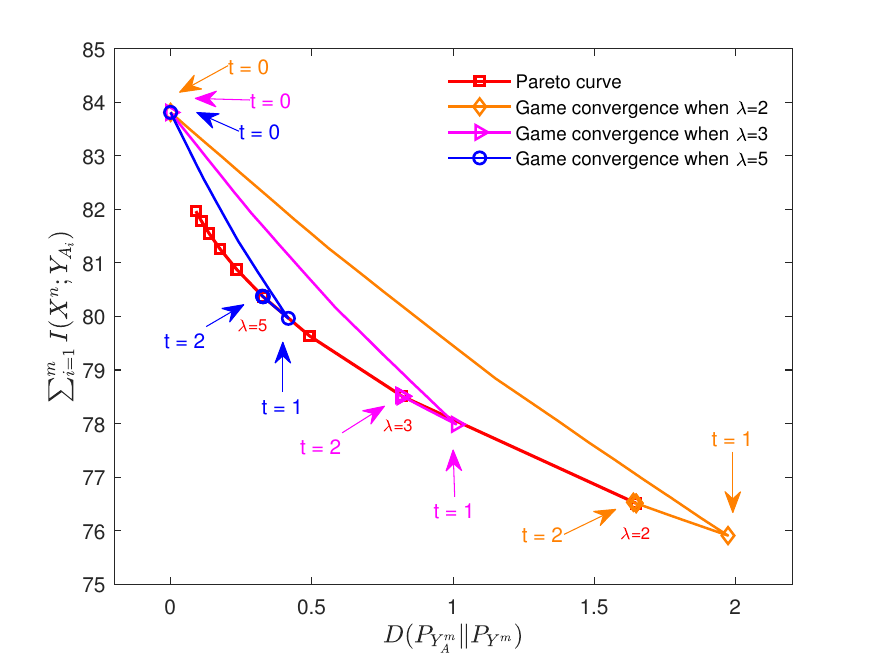}
	\caption{The convergence in {$\Gc_2$} in terms of tradeoff between mutual information and KL divergence on IEEE 9-bus test system with $\rho=0.9$, SNR = 30 dB.}\label{game2_bus9_tradeoff} 
\end{minipage}
\centering
\begin{minipage}[t]{0.49\textwidth}
	\includegraphics[width=8.5cm]{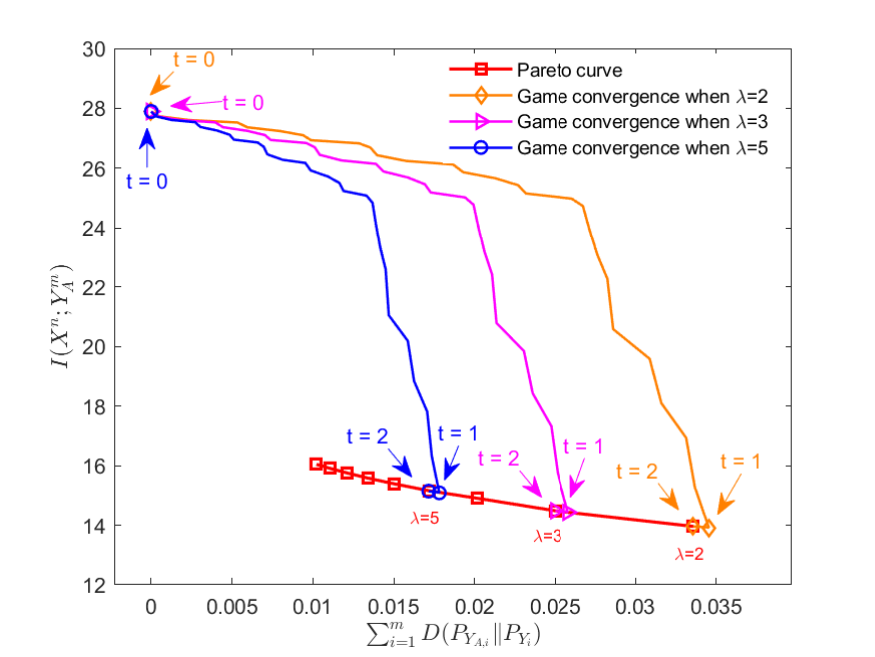}
	\caption{The convergence in {$\Gc_3$} in terms of tradeoff between mutual information and KL divergence on IEEE 9-bus test system with $\rho=0.9$, SNR = 30 dB.}\label{game3_bus9_tradeoff} 
\end{minipage}
\end{figure}

\subsection{Performance in terms of the tradeoff in the game} 
Fig.~\ref{game1_bus9_tradeoff} depicts the convergence in $\Gc_1$ in terms of the tradeoff between mutual information and KL divergence for different $\lambda$ when $\rho=0.9$, SNR = 30 dB on IEEE 9-bus system. From $t = 0$ to $t = 1$, the attackers compromises the corresponding measurements in order which causes mutual information decrease as well as the increasing of KL divergence. Interestingly, from $t = 1$ to $t = 2$, the mutual information increases while KL divergence decreases. It is worth noting that this is because in $t = 0$, there was not attack in the system which allows the attacker to launch an attack that is under the KL divergence constraints easily. Therefore, when all the measurements being compromised in $t = 1$, the overall attacks do not meet the KL divergence constraints. Consequently, from $t = 1$ to $t = 2$, the attackers modify the corresponding attacks and eventually converge to the pareto curve. Fig.~\ref{game2_bus9_tradeoff} and Fig.~\ref{game3_bus9_tradeoff} depict the convergence in $\Gc_2$ and $\Gc_3$ with the same setting as in Fig.~\ref{game1_bus9_tradeoff}. We observe the same phenomenon that in $t = 1$, the attacks do not meet the KL divergence constraints and results in an increasing mutual information and decreasing KL divergence from $t = 1$ to $t = 2$. Interestingly, in Fig.~\ref{game3_bus9_tradeoff}, the mutual information comes across a significant decrease in the later of $t = 0$ to $t = 1$. It is worth noting that the significant decrease comes from the attacks injected to power injection measurements. In~\cite{YE_IETSG_22}, the power injection measurements are more vulnerable to data integrity attacks than power flow measurements.

\section{Conclusion}\label{sec_conclusion}
We have proposed a novel decentralized stealth attack construction that targets at CPS. The objectives of the attack constructions are the disruption and detection both globally and locally that are measured by information metrics. We utilize the interaction between the attackers to formulate games in attack constructions. We have proved the existence and convexity of the potentials in different games that are motived by different information metrics. We proposed best response dynamics to achieve the Nash Equilibrium of the games, accordingly. We have numerically evaluated the performance of the decentralized attacks in IEEE test systems and shown the interaction between the attackers converge to the Nash Equilibrium.

\bibliographystyle{IEEEtran}
\bibliography{fullgamebib}
\onecolumn
\section*{Appendix A}\label{proof_prop_local_I}

\section*{Proof of Proposition~\ref{prop_local_I}}

\begin{proof}
	Note that from~\eqref{ith_noatt} and~\eqref{ith_compromised}, it follows that 
	\be
	Y_i \sim {\cal N}\left(0, \ev_i^{\sf T}   \Sigmam_{Y\!Y} \ev_i \right),
	\ee
	and
	\be
	Y_{A,i} \sim  {\cal N}\left(0, \ev_i^{\sf T}   \Sigmam_{Y\!Y} \ev_i + v_i \right).
	\ee 
	
	The mutual information between random vector of state variables $X^n \sim {\cal N}(\textbf{0}, \Sigmam_{X\!X})$ and the $i$-th random measurement $Y_{A,i}$ is
	\be
	I(X^n; Y_{A,i})  \eqdef  \mathds{E}_{X^n,Y_{A,i}} \left[\log  \dfrac{f_{X^n,Y_{A,i}}}{f_{X^n}f_{Y_{A,i}}}      \right],
	\ee
	where $f_{X^n}, f_{Y_{A,i}}$ and $f_{X^n,Y_{A,i}}$ are the PDF of the random variables $X^n$, $Y_{A_i}$ and $(X^n,Y_{A_i})$, respectively. 
	Let us denote the random vector $W^{n+1} \eqdef (X^n,  Y_{A,i})\in \mathds{R}^{n+1}$ and $f_W$ be the PDF of $W$ such that $f_W = f_{X^n,Y_{A,i}}$. The random vector $W^{n+1}$ follows a joint multivariate Gaussian distribution given by
	\be
	W^{n+1} \sim {\cal N}(\textbf{0}, \Sigmam),
	\ee
	where the block covariance matrix has the following structure:
	\be
	\Sigmam \eqdef 
	\begin{bmatrix}
		\Sigmam_{X\!X} & \Sigmam_{X\!X}	\hv_i^{\sf T}  \\
		\hv_i \Sigmam_{X\!X}  &  \ev_i^{\sf T} \Sigmam_{Y\!Y} \ev_i+v_i\\
	\end{bmatrix},
	\ee
	where $\hv_i$ is the $i$-th row of $\Hm$. Hence, the following holds
	\begin{IEEEeqnarray}{rll}
		\nonumber
		&I(X^n; Y_{A,i}) \\
		= & \mathds{E}_{W} \left[\log   \dfrac{f_W}{f_{X^n}f_{Y_{A,i}}}      \right]    \\
		= & \mathds{E}_{W} \left[\log    f_W      \right] - \mathds{E}_{W} \left[\log  f_{X^n} \right] - \mathds{E}_{W} \left[\log  f_{Y_{A,i}} \right]   \label{20220920_6} \\
		= & \mathds{E}_{W} \left[\textrm{log}      \frac{\textrm{exp}\left(-\frac{1}{2}W^{\sf T} \Sigmam^{-1} W \right)}{   (2 \pi)^{\frac{n+1}{2}} |\Sigmam|^{\frac{1}{2}}}  \right]  -  \mathds{E}_X \left[\textrm{log}  \frac{ \textrm{exp}\left(-\frac{1}{2}X^{\sf T} \Sigmam_{X\!X}^{-1} X \right)}{   (2 \pi)^{\frac{n}{2}} |\Sigmam_{X\!X}|^{\frac{1}{2}}}    \right]   \label{20220920_5}  - \mathds{E}_{Y_{A, i}} \left[\textrm{log}  \frac{ \textrm{exp}\left(-\frac{1}{2} \frac{y_a^2}{  \ev_i^{\sf T}   \Sigmam_{Y\!Y} \ev_i + v_i    }   \right)}{  (2 \pi)^{\frac{1}{2}} (\ev_i^{\sf T}   \Sigmam_{Y\!Y} \ev_i + v_i)^{\frac{1}{2}} }    \right] \\
		= &\dfrac{1}{2} \mathds{E}_{W}\left[- W^{\sf T}\Sigmam^{-1}W - \textrm{log}| \Sigmam |\right]  +  \dfrac{1}{2} \mathds{E}_{X}\left[ X^{\sf T}\Sigmam_{X\!X}^{-1}X +   \textrm{log}| \Sigmam_{X\!X} |\right]  \label{20220920_7} \\
		\nonumber
		&+  \dfrac{1}{2} \mathds{E}_{Y_{A,i}}\left[ \frac{y_a^2}{  \ev_i^{\sf T}   \Sigmam_{Y\!Y} \ev_i + v_i    }    +   \textrm{log}(\ev_i^{\sf T}   \Sigmam_{Y\!Y} \ev_i + v_i) \right]               \\
		= &  \dfrac{1}{2}  \mathds{E}_{W}\!\left[- \!\textrm{tr}(\Sigmam^{-1}\! W \! W^{\sf T})  \right] \! \!+ \! \dfrac{1}{2}  \mathds{E}_{X}\left[ \textrm{tr}(\Sigmam_{X\!X}^{-1}\! X\! X^{\sf T})\right]     + \dfrac{1}{2\left(\ev_i^{\sf T}   \Sigmam_{Y\!Y} \ev_i + v_i\right)}\mathds{E}_{Y_{A,i}}\left[    y_a^2      \right]    \\
		\nonumber
		& +  \dfrac{1}{2}  \textrm{log}       \dfrac{ |\Sigmam_{X\!X}| (  \ev_i^{\sf T}   \Sigmam_{Y\!Y} \ev_i + v_i       )}{|\Sigmam|} \\
		= &   \dfrac{1}{2} \left( - (n+1) + n + 1 \right)  + \dfrac{1}{2}  \textrm{log}       \dfrac{ |\Sigmam_{X\!X}|(  \ev_i^{\sf T}   \Sigmam_{Y\!Y} \ev_i + v_i       )}{|\Sigmam|}         \\
		= &\dfrac{1}{2}  \textrm{log}       \dfrac{ |\Sigmam_{X\!X}| (  \ev_i^{\sf T}   \Sigmam_{Y\!Y} \ev_i + v_i       )}{|\Sigmam|},  \label{20220920_8}
	\end{IEEEeqnarray}
	where the equality in~\eqref{20220920_5} holds from taking the probability density functions of $W^{n+1}$, $Y_{A,i}$ and $X^n$ to~\eqref{20220920_6} and the equality in~\eqref{20220920_7} follows from taking the constants out of the expectation.
	
	The following also holds
	\begin{IEEEeqnarray}{rll}
		\label{20220518_1}
		\nonumber
		& | \Sigmam | \\
		=& |\Sigmam_{X\!X}|| \ev_i^{\sf T}   \Sigmam_{Y\!Y} \ev_i +v_i - \hv_i \Sigmam_{X\!X} \Sigmam_{X\!X}^{-1}  \Sigmam_{X\!X}	\hv_i^{\sf T}| \label{20220920_2}\\
		=& |\Sigmam_{X\!X}|| \ev_i^{\sf T}  \Hm\Sigmam_{X\!X}\Hm^{\sf T} \ev_i +\sigma^2 + v_i - \hv_i \Sigmam_{X\!X} 	\hv_i^{\sf T}|  \label{20220920_3}\\
		=& |\Sigmam_{X\!X}|( \sigma^2 + v_i  ),  \label{20220920_4}
	\end{IEEEeqnarray}
	where the equality in~\eqref{20220920_2} holds from~\cite[14.17(a)]{seber}; the equality in~\eqref{20220920_3} follows from $\Sigmam_{Y\!Y} = \Hm\Sigmam_{X\!X}\Hm^{\sf T} + \sigma^2\textbf{I}_m$ and the equality in~\eqref{20220920_4} follows from $\hv_i = \ev_i^{\sf T}  \Hm$ and $\hv_i^{\sf T} =  \Hm^{\sf T} \ev_i$.
	Therefore,~\eqref{20220920_8} yields that 
	\begin{IEEEeqnarray}{rll}
		I(X^n; Y_i + A_i)  =  &\dfrac{1}{2}  \textrm{log}       \dfrac{ |\Sigmam_{X\!X}| (  \ev_i^{\sf T}   \Sigmam_{Y\!Y} \ev_i + v_i       )}{|\Sigmam|}   \\
		=  &\dfrac{1}{2}  \textrm{log}       \dfrac{ |\Sigmam_{X\!X}| (  \ev_i^{\sf T}   \Sigmam_{Y\!Y} \ev_i + v_i       )}{|\Sigmam_{X\!X}|( \sigma^2 + v_i  )}   \\
		= & \frac{1}{2}\log \left(1+\frac{ \ev_i^{\sf T}\Hm \Sigmam_{X\!X} \Hm^{\sf T}\ev_i }{\sigma^2 +v_i}\right).   
	\end{IEEEeqnarray}
	This completes the proof.
	
\end{proof}

\section*{Appendix B}\label{app_B}
\section*{Proof of Proposition~\ref{prop_local_D}}\label{proof_prop_local_D}

\begin{proof}
	
	Let $f_{P_{Y_{A, i}}}$ and $f_{P_{Y_i}}$ denote the PDF of $P_{Y_{A, i}}$ and $P_{Y_i} $, respectively. Note that $Y_i \sim {\cal N}\left(0,\textnormal{tr}\left(\Hm \Sigmam_{X\!X} \Hm^{\sf T}\ev_i \ev_i^{\sf T}\right) + \sigma^2\right)$ and $Y_{A,i}~\sim~{\cal N}\left(0,\textnormal{tr}\left(\Hm \Sigmam_{X\!X} \Hm^{\sf T}\ev_i \ev_i^{\sf T}\right) + \sigma^2 + v_i\right)$. The KL divergence between $P_{Y_{A, i}}$ and $P_{Y_i}$ is given by
	
	\begin{IEEEeqnarray}{rll}
		&D(P_{Y_{A, i}} \| P_{Y_i} ) \\
		\eqdef & \mathds{E}_{P_{Y_{A,i}}} \left[\textrm{log} \dfrac{f_{Y_{A, i} }}{f_{Y_i}} \right]  \\
		= & \mathds{E}_{P_{Y_{A,i}}} \left[\textrm{log} \frac{\dfrac{1}{\sqrt{\textnormal{tr}\left(\Hm \Sigmam_{X\!X} \Hm^{\sf T}\ev_i \ev_i^{\sf T}\right)+\sigma^2 +v_i}\sqrt{2\pi}  }\textnormal{exp}\left[ \frac{-x^2}{2(\textnormal{tr}\left(\Hm \Sigmam_{X\!X} \Hm^{\sf T}\ev_i \ev_i^{\sf T}\right)+\sigma^2 +v_i)}\right] }{\frac{1}{\sqrt{\textnormal{tr}\left(\Hm \Sigmam_{X\!X} \Hm^{\sf T}\ev_i \ev_i^{\sf T}\right)+\sigma^2}\sqrt{2\pi}}\textnormal{exp}\left[ \frac{-x^2}{2\textnormal{tr}\left(\Hm \Sigmam_{X\!X} \Hm^{\sf T}\ev_i \ev_i^{\sf T}\right) +\sigma^2}\right]}\right] \label{20211025_1}\\
		=& \dfrac{1}{2}  \mathds{E}_{P_{Y_{A,i}}} \left[ \dfrac{-x^2}{  \textnormal{tr}\left(\Hm \Sigmam_{X\!X} \Hm^{\sf T}\ev_i \ev_i^{\sf T}\right)+\sigma^2+v_i }   -   \dfrac{-x^2}{  \textnormal{tr}\left(\Hm \Sigmam_{X\!X} \Hm^{\sf T}\ev_i \ev_i^{\sf T}\right)+\sigma^2 }   \right]  + \dfrac{1}{2} \textrm{log} \dfrac{\textnormal{tr}\left(\Hm \Sigmam_{X\!X} \Hm^{\sf T}\ev_i \ev_i^{\sf T}\right)+\sigma^2}{ \textnormal{tr}\left(\Hm \Sigmam_{X\!X} \Hm^{\sf T}\ev_i \ev_i^{\sf T}\right) +\sigma^2+ v_i    }   \\
		=&\dfrac{1}{2} \dfrac{v_i}{\left(\textnormal{tr}\left(\Hm \Sigmam_{X\!X} \Hm^{\sf T}\ev_i \ev_i^{\sf T}\right)+\sigma^2\right)\left(\left(\textnormal{tr}\left(\Hm \Sigmam_{X\!X} \Hm^{\sf T}\ev_i \ev_i^{\sf T}\right)+\sigma^2+v_i\right)\right)} \mathds{E}_{P_{Y_{A,i}}} \left[ x^2  \right]\\
		\nonumber
		&+ \dfrac{1}{2} \textrm{log} \dfrac{\textnormal{tr}\left(\Hm \Sigmam_{X\!X} \Hm^{\sf T}\ev_i \ev_i^{\sf T}\right)+\sigma^2}{ \textnormal{tr}\left(\Hm \Sigmam_{X\!X} \Hm^{\sf T}\ev_i \ev_i^{\sf T}\right) +\sigma^2+ v_i    }    \\
		\nonumber
		=&\dfrac{1}{2} \dfrac{v_i}{\left(\textnormal{tr}\left(\Hm \Sigmam_{X\!X} \Hm^{\sf T}\ev_i \ev_i^{\sf T}\right)+\sigma^2\right)\left(\textnormal{tr}\left(\Hm \Sigmam_{X\!X} \Hm^{\sf T}\ev_i \ev_i^{\sf T}\right)+\sigma^2+v_i\right)}\left(\textnormal{tr}\left(\Hm \Sigmam_{X\!X} \Hm^{\sf T}\ev_i \ev_i^{\sf T}\right)+\sigma^2+v_i\right) \\
		&+ \dfrac{1}{2} \textrm{log} \dfrac{\textnormal{tr}\left(\Hm \Sigmam_{X\!X} \Hm^{\sf T}\ev_i \ev_i^{\sf T}\right)+\sigma^2}{ \textnormal{tr}\left(\Hm \Sigmam_{X\!X} \Hm^{\sf T}\ev_i \ev_i^{\sf T}\right) +\sigma^2+ v_i    }  \label{20211025_2} \\
		=& \dfrac{1}{2}\left(   \dfrac{v_i}{\textnormal{tr}\left(\Hm \Sigmam_{X\!X} \Hm^{\sf T}\ev_i \ev_i^{\sf T}\right)+\sigma^2 } + \log \dfrac{\textnormal{tr}\left(\Hm \Sigmam_{X\!X} \Hm^{\sf T}\ev_i \ev_i^{\sf T}\right)+\sigma^2}{ \textnormal{tr}\left(\Hm \Sigmam_{X\!X} \Hm^{\sf T}\ev_i \ev_i^{\sf T}\right) +\sigma^2+ v_i    } \right), 
	\end{IEEEeqnarray} 
	where $\eqref{20211025_1}$ follows from taking the density function of $P_{Y_{A, i}}$ and $P_{Y_i} $; $\eqref{20211025_2}$ follows from the fact that the expectation of the random variable $X^2$ such that $X \sim {\cal N}(0,\textnormal{tr}\left(\Hm \Sigmam_{X\!X} \Hm^{\sf T}\ev_i \ev_i^{\sf T}\right)+\sigma^2 + v_i )$ is $\textnormal{tr}\left(\Hm \Sigmam_{X\!X} \Hm^{\sf T}\ev_i \ev_i^{\sf T}\right) +\sigma^2+ v_i$. 
	This completes the proof.
	
\end{proof}

\section*{Appendix C}
\section*{Proof of Proposition~\ref{prop_game1_cov}}\label{proof_prop_game3_cvx}

\begin{proof}
	Let us define
	\be
	\Am~\eqdef~\Hm\Sigmam_{X\!X}\Hm^{\sf T}~\sum_{j \in {\cal K} \setminus \{i\}} \dfrac{1}{\sigma^2 + v_j} \ev_j \ev_j^{\sf T}    + \textbf{I}_m,
	\ee
	and $a_i \eqdef  \textnormal{tr}\left(  \Am^{-1}\Hm\Sigmam_{X\!X}\Hm^{\sf T}\ev_i \ev_i^{\sf T} \right)$. The derivative of the term $\textnormal{log} \left|\dfrac{1}{\sigma^2 + v_i} \Hm\Sigmam_{X\!X}\Hm^{\sf T}\ev_i \ev_i^{\sf T} + \Am\right|$ with respect to $v_i$ is 
	
	\begin{IEEEeqnarray}{rll} 
		& \dfrac{\partial }{\partial v_i}\textnormal{log} \left|\dfrac{1}{\sigma^2 + v_i} \Hm\Sigmam_{X\!X}\Hm^{\sf T}\ev_i \ev_i^{\sf T} + \Am\right| \\
		=&-\dfrac{1}{ (\sigma^2 + v_i)^2} \textnormal{tr}\left( \left(\dfrac{1}{\sigma^2 + v_i} \Hm\Sigmam_{X\!X}\Hm^{\sf T}\ev_i \ev_i^{\sf T} +  \Am \right)^{-1} \Hm\Sigmam_{X\!X}\Hm^{\sf T}\ev_i \ev_i^{\sf T}\right) \label{20211001_1} \\
		= &-\dfrac{1}{ (\sigma^2 + v_i)^2} \textnormal{tr}\left( \left(\Am^{-1} - \dfrac{\dfrac{1}{  \sigma^2 + v_i }}{1+\dfrac{1}{  \sigma^2 + v_i } \ev_i^{\sf T} \Am^{-1} \Hm\Sigmam_{X\!X}\Hm^{\sf T}\ev_i} \Am^{-1}\Hm\Sigmam_{X\!X}\Hm^{\sf T}\ev_i \ev_i^{\sf T}\Am^{-1}  \right) \Hm\Sigmam_{X\!X}\Hm^{\sf T}\ev_i \ev_i^{\sf T}\right) \label{20211001_2} \\
		= &-\dfrac{1}{ (\sigma^2 + v_i)^2} \textnormal{tr}\left(  \Am^{-1}\Hm\Sigmam_{X\!X}\Hm^{\sf T}\ev_i \ev_i^{\sf T} \right) \\
		&+\dfrac{1}{ (\sigma^2 + v_i)^2}\dfrac{\dfrac{1}{   \sigma^2 + v_i}}{1+\dfrac{1}{ \sigma^2 + v_i} \ev_i^{\sf T} \Am^{-1} \Hm\Sigmam_{X\!X}\Hm^{\sf T}\ev_i} \textnormal{tr}\left(   \Am^{-1}\Hm\Sigmam_{X\!X}\Hm^{\sf T}\ev_i \ev_i^{\sf T}\Am^{-1} \Hm\Sigmam_{X\!X}\Hm^{\sf T}\ev_i \ev_i^{\sf T}   \right) \label{20211001_3} \\
		= & -\dfrac{1}{ (\sigma^2 + v_i)^2}\alpha_i + \dfrac{1}{ (\sigma^2 + v_i)^2}\dfrac{\dfrac{1}{  \sigma^2 + v_i }}{1+\dfrac{1}{ \sigma^2 + v_i  } \alpha_i}  \alpha_i^2 \label{20211021_1} \\
		= & - \dfrac{\alpha_i}{(\sigma^2+v_i)\left(\sigma^2+v_i+\alpha_i\right)} \label{20220519_8},
	\end{IEEEeqnarray} 
	where the equality in~\eqref{20211001_1} follows from taking the derivative of $\textnormal{log} \left|\dfrac{1}{\sigma^2+ v_i} \Hm\Sigmam_{X\!X}\Hm^{\sf T}\ev_i \ev_i^{\sf T} +  \Am \right|$ with respect of $v_i$~\cite[Statement 17.18(a)]{seber} and taking $-\dfrac{1}{(\sigma^2+ v_i)^2}$ out of the trace, \eqref{20211001_2} follows from Sherman-Morrison Formula in~\cite[15.2(b)]{seber}. Note that $\alpha_i>0$ and $v_i>0$. It follows that $- \dfrac{\alpha_i}{(\sigma^2+v_i)\left(\sigma^2+v_i+\alpha_i\right)} < 0$.
	We now proceed to  exam the second derivative of $\textnormal{log} \left|  \dfrac{1}{\sigma^2+ v_i} \Hm\Sigmam_{X\!X}\Hm^{\sf T}\ev_i \ev_i^{\sf T} + \Am \right|$, that is,
	\begin{IEEEeqnarray}{rll} 
		&\dfrac{\partial^2 }{\partial v_i^2} \left(- \dfrac{\alpha_i}{(\sigma^2+v_i)\left(\sigma^2+v_i+\alpha_i\right)}\right) 
		= \dfrac{\alpha_i\left(2(\sigma^2+v_i)+\alpha_i\right)}{(\sigma^2+v_i)^2\left(\sigma^2+v_i+\alpha_i\right)^2}. 
	\end{IEEEeqnarray} 
	This completes the proof.
	
\end{proof}
 
\end{document}